\numberwithin{equation}{section}
\newtheorem{theorem}{Theorem}[section]
\newtheorem{lemma}{Lemma}[section]
\newtheorem{assumption}{Assumption}[section]
\newtheorem{proposition}{Proposition}[section]
\newtheorem{example}{Example}[section]
\newtheorem{remark}{Remark}[section]
\newtheorem{definition}{Definition}[section]
\newenvironment{proof}{{\noindent\it Proof:\quad}}{\hfill$\square$\par}
\newcommand{\qed}{\nobreak \ifvmode \relax \else
      \ifdim\lastskip<1.5em \hskip-\lastskip
      \hskip1.5em plus0em minus0.5em \fi \nobreak
      \vrule height0.75em width0.5em depth0.25em\fi}
\def\XXint#1#2#3{{\setbox0=\hbox{$#1{#2#3}{\int}$}
\vcenter{\hbox{$#2#3$}}\kern-.51\wd0}}
\begin{document}

\title{Numerical methods for  two-dimensional
G-heat equation}
\author{Ziting Pei, \ \ Xingye Yue\thanks{%
The corresponding author. Email:xyyue@suda.edu.cn (X. Yue).}, \ \ Xiaotao Zheng}
\date{\today }
\maketitle
\begin{abstract}
The G-expectation is a sublinear expectation. It is an important tool for pricing financial products and
managing risk thanks to its ability to deal with model uncertainty. The problem is how to
efficiently quantify it since the commonly used Monte Carlo method does not work. 
Fortunately, the expectation of a G-normal random variable can be linked to the viscosity solution of a fully nonlinear G-heat equation. In this paper, we propose a novel numerical scheme for the two-dimensional G-heat equation and pay more attention to the case that there exists uncertainty on the correlationship, especially to the case that the correlationship ranges from negative to positive. The scheme is monotonic, stable, and convergent.  The numerical tests show that the scheme is highly efficient.

\emph{Keywords:} {G-expectation; G-heat equation; model uncertainty; inner
iteration; convergence; viscosity solution.}
\end{abstract}

\section{Introduction}

 In 2006, Peng\textsuperscript{\cite{p06}} introduced the so-called  G expectation to treat problems with model uncertainty. It has developed rapidly in order to respond to the
increasing demand for robust quantitative analysis and risk management.
Moreover, according to Peng%
\textsuperscript{\cite{p10}}, G-expectation is a coherent risk measure that
satisfies all the axioms proposed in Artzner et al.\textsuperscript{\cite{AD}}. 
It is difficult to determine G-expectation by Monte Carlo sampling, since the distribution of the random variable is uncertain. However, the G-expectation is related to a fully nonlinear G-heat equation\textsuperscript{\cite{p10}}.  One can quantify the G-expectation by numerically solving the G-heat equation.

The work of Barles and Souganidis\textsuperscript{\cite{G1991Convergence}} provided a theoretical foundation for fully nonlinear second-order equations that the numerical solution of a consistent and monotonic scheme converges to the viscosity solution of the original equation. For the one-dimensional case, the G-heat equation appears early as an option pricing model with volatility uncertainty, Pooley et al.\textsuperscript{\cite{pooley}} developed numerical
algorithms and discussed their convergence properties. For the multi-dimensional G-heat equations, it is non-trivial to construct a monotone scheme to ensure its convergence to the viscosity solution (Barles and Souganidis
\textsuperscript{\cite{G1991Convergence}}, Barles et al.\textsuperscript{%
\cite{1995CONVERGENCE}}). In this paper, we take the two-dimensional G-heat equation as an example, while the three-dimensional case can be analyzed analogously.

The two-dimensional G-heat equation also appears early as a two-factor uncertain volatility model. Pooley et al.\textsuperscript{\cite{pooleyb}} numerically solved the equation; however, the scheme was not guaranteed to be monotone. The main difficulty in constructing
monotone schemes is to treat the cross-derivative term. When the sign of correlationship is determined, a compact seven-point stencil ({\O }ksendal and Sulem%
\textsuperscript{\cite{KSENDAL}}, Clift and Forsyth\textsuperscript{%
\cite{clift2008numerical}}) that relies on the sign of the correlationship is employed for the discretization of the cross-derivative.
 To ensure monotonicity for problems with the cross-derivative, Bonnans and Zidani%
\textsuperscript{\cite{Bonnans2003}}, Debrabant and Jakobsen%
\textsuperscript{\cite{Debrabant2013}} focused on explicit wide stencil
schemes, while Ma and Forsyth\textsuperscript{\cite{MF}} proposed an implicit numerical scheme that combines the use of a fixed-point stencil and a wide stencil based on a local coordinate rotation.

However, so far, there has been no discussion that takes into account the situation in which the sign of the correlationship is uncertain. The approximation for the second-order cross-derivative plays a key role in obtaining a monotonic scheme. The selection of the seven-point stencil depends on the sign of the correlationship, while the sign of the correlationship, in turn, depends on the selected seven-point stencil. To break this cycle of dilemmas, we develop a novel implicit numerical scheme that is stable, consistent, and monotone. Thus, our numerical
scheme guarantees the convergence to the viscosity solution. 

The organization of this paper is as follows. In Section \ref{sec:2}, we
review the basic concepts of the G-expectation and the G-heat equation. In Section \ref%
{sec:3},  we
develop an implicit numerical scheme to solve the general
two-dimensional G-heat equation for the case where the correlationship varies from negative to positive. In Section \ref{sec:4}, we show the monotonicity of the scheme, which guarantees convergence to the viscosity solution. We present an estimate of the convergence rate. In particular, we show that the
non-linear iteration at each timestep is always convergent. In Section \ref%
{sec:5}, we validate the efficiency of our numerical scheme through
numerical examples. Finally, we provide some conclusions in Section \ref%
{section4}.

\section{Background}

{\label{sec:2}} 
In this section, we recall some basic knowledge about
Peng's G-stochastic calculus. The readers are referred to \textrm{\cite{p10}} for
more information.

\begin{definition}
The G-expectation $\mathbb{E}$ is a sublinear expectation that is a functional $\mathbb{E}$: ${\mathcal{H\mapsto}}\mathbb{R}$ satisfying

(a) Monotonicity: If $X\geq Y$, then $\mathbb{E[}X\mathbb{]}\geq
\mathbb{E[}Y\mathbb{]}$.

(b) Constant preserving: $\mathbb{E[}c\mathbb{]=}c,\forall c\in\mathbb{R}$.

(c) Sub-additivity: $\mathbb{E[}X+Y\mathbb{]}\leq\mathbb{E[}X\mathbb{]}%
+\mathbb{E[}Y\mathbb{]}$.

(d) Positive homogeneity: $\mathbb{E[\lambda}X\mathbb{]=\lambda E[}%
X\mathbb{]},\forall\lambda\geq0$.
\end{definition}

\begin{definition}
Let $X_{1}$ and $X_{2}$ be two $d$-dimensional random vectors defined on the
sublinear expectation spaces $(\Omega,\mathcal{H},\mathbb{E})$. They are
called identically distributed, denoted by $X_{1}\overset{d}{=}X_{2}$, if%
\[
\mathbb{E}[\varphi(X_{1})]=\mathbb{E}[\varphi(X_{2})],\ \forall\varphi\in
C_{l.Lip}(\mathbb{R}^{d}).
\]

\end{definition}

\begin{definition}
In a sublinear expectation space $(\Omega,\mathcal{H},\mathbb{E})$, a random
vector $Y\in{\mathcal{H}}^{d}$ is said to be independent of another random
vector $X\in{\mathcal{H}}^{d}$ under $\mathbb{E}$ if for each test function
$\varphi\in C_{l.Lip}(\mathbb{R}^{2d})$ we have%
\[
\mathbb{E}[\varphi(X,Y)]=\mathbb{E}\left[  \mathbb{E}[\varphi(x,Y)]_{x=X}%
\right]  .
\]

\end{definition}

\begin{definition}
($G$\textbf{-normal distribution)}. A $d$-dimensional random vector
$X=(X_{1},...,X_{d})$ in a sublinear expectation space $(\Omega,\mathcal{H}%
,\mathbb{E})$ is called G-normal distributed if for each $a,b>0$ we have%
\[
aX+b\overline{X}\,\overset{d}{=}\sqrt{a^{2}+b^{2}}X
\]
where $\overline{X}$ is an independent copy of $X$.
\end{definition}

Given a G-normal random variable $X\in \mathbb{R}^{d}$, we want to quantify its G-expectation $\mathbb{E}[\varphi (X)]$ for some application $\varphi$. Due to the uncertainty of covariance, we do not know how to obtain the samples for $X$, so the common-used  Monte Carlo simulation does not work for the G-expectation.   Let $u(t,x)=\mathbb{E}[\varphi (x+\sqrt{t%
}X)]$, 
 Peng\textsuperscript{\cite{p10}}shows that 
$u$ is the viscosity solution of the following so-called G-equation%
\begin{equation}
\begin{cases}
\partial _{t}u-G(D^{2}u)=0, & (t,x)\in (0,\infty )\times \mathbb{R}^{d}, \\
u(0,x)=\varphi (x), &
\end{cases}
\label{G_heat}
\end{equation}%
where $D^{2}u$ is the Hessian matrix of $u$, and
\begin{equation}
\begin{aligned} G(A) &:= \frac{1}{2}{\mathbb{E}}[ \langle AX,X \rangle], \\
&= \frac{1}{2}\sup_{Q \in \Theta} \text{Tr}[AQ], \end{aligned}
\label{G-generator1}
\end{equation}%
where ${A\in }\mathbb{S}(d)${, }$\mathbb{S}(d)$ denotes the space of $%
d\times d$ symmetric matrices, $\Theta $ represents the set of all possible
symmetric matrices, which is a given bounded, closed, nonnegative-definite and
convex subset of $\mathbb{R}^{d\times d}$.   
If $\Theta $ is a singleton: $\Theta =\{Q\}$,
then $X$ is a classical zero mean normal distributed with covariance $Q$. In general, $\Theta $ characterizes the covariance uncertainty of $X$,
and
\begin{equation}
\begin{aligned} Q &= \left( \begin{array}{cccc} {\sigma _1^2} & {b _{12}} &
\cdots & {b_{1d}} \\ {b_{21}} & {\sigma _2^2} & \cdots & {b _{2d}} \\ \vdots
& \vdots & \ddots & \vdots \\ {b_{d1}} & {b _{d2}} & \cdots & {\sigma _d^2}
\end{array} \right), \end{aligned}  \label{covariance_matrix}
\end{equation}%
is a symmetric nonnegative definite matrix and $b_{i,j}=b_{j,i}$. 

So, if we can solve G-equation \ref{G_heat}, we get $\mathbb{E}[\varphi (X)]=u(1,0)$.


In this paper, we will only consider two-dimensional problems, that is, $X=(X_1, X_2)$.   \eqref{G_heat} can be rewritten
as%
\begin{equation}
\left\{
\begin{array}{l}
u_{t}-\sup\limits_{Q\in \Theta }(\frac{\sigma _{1}^{2}}{2}u_{xx}+\frac{%
\sigma _{2}^{2}}{2}u_{yy}+b_{12}u_{xy})=0,(t,x,y)\in
(0,\infty )\times \mathbb{R}\times \mathbb{R}, \\
u(0,x,y)=\varphi (x,y),%
\end{array}%
\right.  \label{two-dimensional}
\end{equation}%
where the uncertainty of $Q$ can be identified by testing the proper symmetric matrices $A$
in equation \eqref{G-generator1}. Choosing, respectively, in equation (\ref%
{G-generator1}),
\begin{equation}  \label{A1}
\begin{aligned} A_1 = \left( {\begin{array}{*{20}{c}} 1&0\\ 0&0 \end{array}}
\right),A_2 = \left( {\begin{array}{*{20}{c}} { - 1}&0\\ 0&0 \end{array}}
\right), \end{aligned}
\end{equation}
we can derive the uncertainty in the variance of $X_1$,
\begin{equation}
\left\{
\begin{array}{l}
\underset{\sigma _{1}\in \Gamma _{1}}{\sup }\sigma _{1}^{2} = \mathbb{E}\left[ X_{1}^{2}\right],\\
\underset{\sigma _{1}\in \Gamma _{1}}{\inf }\sigma _{1}^{2}=-\mathbb{E}\left[-X_{1}^{2}\right].\end{array}\right.  \label{2d_uncertain_volatility}
\end{equation}It is evident that $\sigma _{1}^{2}\in \Gamma _{1} \stackrel{\triangle}{=} [-\mathbb{E}{[-X_{1}^{2}]},\mathbb{E}{[X_{1}^{2}]}]$.

Similarly, by choosing matrix $A$ in equation \eqref{G-generator1} as:
\begin{equation}
\begin{aligned} A_3 = \left( {\begin{array}{*{20}{c}} 0&0\\ 0&1 \end{array}}
\right),A_4 = \left( {\begin{array}{*{20}{c}} 0&0\\ 0&{ - 1} \end{array}}
\right), \end{aligned}  \label{2d_A2}
\end{equation}%
 we get $\sigma _{2}^{2}\in \Gamma _{2}
\stackrel{\triangle}{=}[-\mathbb{E}{[-X_{2}^{2}]},\mathbb{E}{[X_{2}^{2}]}]$.
On the other hand, by choosing specific matrices  as:
\begin{equation}
\begin{aligned} A_5 = \left( {\begin{array}{*{20}{c}} 0&1\\ 1&0 \end{array}}
\right),A_6 = \left( {\begin{array}{*{20}{c}} 0&-1\\ -1&0 \end{array}}
\right),  \end{aligned}  \label{2d_A12}
\end{equation}%
 we get the uncertainty in the covariance between $X_{1}$, and $X_{2}$,
\begin{equation}
\left\{
\begin{array}{l}
\mathop {{\rm{sup}}}\limits_{b_{12}\in {\Gamma _{12}}%
}b_{12}=\mathbb{E}{[X_{1}X_{2}]}, \\
\mathop {{\rm{inf}}}\limits_{b_{12}\in {\Gamma _{12}}%
}b_{12}=-\mathbb{E}{[-X_{1}X_{2}]}.%
\end{array}%
\right.  \label{2d_uncertain_correlation}
\end{equation}%
Thus, we obtain $b_{12} \in \Gamma_{12}
\stackrel{\triangle}{=} [- \mathbb{E} {[- X_1X_2]}, \mathbb{E} {[X_1X_2]}]$. In summary,   \eqref{G_heat} can be written as follows:
\begin{equation}
\left\{
\begin{array}{l}
u_{t}-\sup\limits_{\substack{ \sigma _{1}^{2}\in \Gamma _{1},\sigma
_{2}^{2}\in \Gamma _{2},  \\ b_{12}\in \Gamma _{12}}}%
\left( \frac{\sigma _{1}^{2}}{2}u_{xx}+\frac{\sigma _{2}^{2}}{2}u_{yy}+b_{12}u_{xy}\right) =0,(t,x,y)\in (0,\infty )\times \mathbb{R%
}\times \mathbb{R}, \\
u(0,x,y)=\varphi (x,y).%
\end{array}%
\right.  \label{2d_G3}
\end{equation}

Specifically, if $X_{1}$ is independent of $X_{2}$ (or if $X_{2}$ is
independent of $X_{1}$ ), we have $\mathbb{E}\left[ X_{1}X_{2}\right] =0$,
and $b_{12}=0$. The G-heat equation \eqref{G_heat} is
referred to as the independent model (IM),
\begin{equation}
\left\{
\begin{array}{l}
u_{t}-\underset{\sigma _{1}^{2}\in \Gamma _{1},\sigma _{2}^{2}\in \Gamma _{2}%
}{\sup }\left( \frac{\sigma _{1}^{2}}{2}u_{xx}+\frac{\sigma _{2}^{2}}{2}%
u_{yy}\right) =0,\left( t,x,y\right) \in (0,\infty )\times \mathbb{R}\times
\mathbb{R}, \\
u(0,x,y)=\varphi \left( x,y\right).%
\end{array}%
\right.  \label{IM}
\end{equation}%

\begin{remark}
 If there is no uncertainty in the volatility $\sigma
_{1}^{2}$ and $\sigma _{2}^{2}$, that is, $-\mathbb{E}[-X_{1}^{2}]=\mathbb{E}[X_{1}^{2}]={\sigma^2_{1}}$, $-\mathbb{E}[-X_{2}^{2}]=\mathbb{E}[X_{2}^{2}]={\sigma^2_{2}}$,
then we can define the uncertainty of `correlation coefficient' as  
$$
\rho  \in  \left[\frac{-\mathbb{E}[-X_{1}X_{2}]}{\sigma_1\sigma_2},\frac{\mathbb{E}[X_{1}X_{2}]}{\sigma_1\sigma_2}\right].
$$
Otherwise, we do not have the concept of correlation coefficient. 
\end{remark}

  \eqref{2d_G3} and \eqref{IM} are   HJB equations of some stochastic control problems,
and numerical methods have been extensively investigated when the sign of $b_{12}$ is definite
\textsuperscript{\cite{clift2008numerical},}\textsuperscript{%
\cite{Forsyth2007},}\textsuperscript{\cite{MF},}\textsuperscript{%
\cite{KSENDAL}}. If $\mathbb{E} {[- X_1 X_2]}* \mathbb{E} {[X_1 X_2]}]>0$, then the sign of $b_{12}$  can be changed. There does not exist any reliable numerical scheme for this case.

\section{A finite difference scheme}

\label{sec:3} \label{section3}   In this
section, we consider the case where the sign of covariance $b_{12}$ in the G-heat
equation \eqref{2d_G3} is uncertain.
\begin{equation}
\left\{
\begin{array}{l}
u_{t}-\sup\limits_{_{\substack{ \sigma _{1}^{2}\in \Gamma _{1},\sigma
_{2}^{2}\in \Gamma _{2},  \\ b_{12}\in \Gamma _{12}}}%
}\left( \frac{\sigma _{1}^{2}}{2}u_{xx}+\frac{\sigma _{2}^{2}}{2}u_{yy}+b_{12}u_{xy}\right) =0,(t,x,y)\in (0,\infty )\times \mathbb{R%
}\times \mathbb{R}, \\
u(0,x,y)=\varphi (x,y),%
\end{array}%
\right.  \label{UVCM}
\end{equation}%
where $\Gamma _{1}=[\underline{\sigma _{1}}^{2},\overline{\sigma _{1}}^{2}],$
$\Gamma _{2}=[\underline{\sigma _{2}}^{2},\overline{\sigma _{2}}^{2}],$  $%
\Gamma _{12}=[\underline{b_{12}},\overline{b_{12}}],$ and specially, 
\begin{equation}
  \underline{b_{12}}< 0 < 
\overline{b_{12}}. \label{b-range}  
\end{equation} Here we used the notations $\overline{\sigma_{i}}^{2}=\mathbb{E}\left[ X_{i}^{2}\right],\underline{\sigma _{i}}^{2}=-\mathbb{E}\left[- X_{i}^{2}\right]$, 
and 
$\overline{b_{12}} =\mathbb{E}{[X_{1}X_{2}]},  
\underline{b_{12}} =-\mathbb{E}{[-X_{1}X_{2}]}$.

It is easy to see that the optimal variance and covariance 
 are reached at the end of the intervals,  depending only on the sign of the second-order partial derivatives, that is,
\begin{equation}
\sigma _{1}^{2}\left( u_{xx}\right) =\left\{
\begin{array}{l}
\overline{\sigma _{1}}^{2}\text{\ \ }u_{xx}\geq 0, \\
\underline{\sigma _{1}}^{2}\text{ \ }u_{xx}<0,%
\end{array}%
\right.
\end{equation}%

\begin{equation}
\sigma _{2}^{2}\left( u_{yy}\right) =\left\{
\begin{array}{l}
\overline{\sigma _{2}}^{2}\text{\ \ }u_{yy}\geq 0, \\
\underline{\sigma _{2}}^{2}\text{ \ }u_{yy}<0,%
\end{array}%
\right.
\end{equation}%
and
\begin{equation}
b_{12}\left( u_{xy}\right) =\left\{
\begin{array}{l}
\overline{b_{12}}\text{ \ \ }u_{xy}\geq 0, \\
\underline{b_{12}}\text{ \ \ }u_{xy}<0.%
\end{array}%
\right. \label{b12}
\end{equation}

For computational purpose, we confine the problem \eqref{UVCM} within a truncated bounded
domain,
\begin{equation*}
0\leq t\leq T\text{ and }\left( x,y\right) \in  {\Omega },\text{ }%
 {\Omega }=\left\{ \left( x,y\right) | |x|<L,\text{ }%
|y|< L \right\},
\end{equation*}
with Dirichlet boundary condition. Subsequently, the problem is reformulated
as
\begin{equation}
\left\{
\begin{array}{l}
u_{t}-\frac{\sigma _{1}^{2}\left( u_{xx}\right) }{2}u_{xx}-\frac{\sigma _{2}^{2}\left( u_{yy}\right) }{2}u_{yy}-b_{12}\left( u_{xy}\right) u_{xy}=0,\text{ }\left( t,x,y\right) \in \left( 0,T\right) \times \Omega , \\
u|_{t=0}=\varphi (x,y), \\
u|_{\left( x,y\right) \in \partial \Omega }=\phi \left(
t,x,y\right).\end{array}\right.  \label{general model}
\end{equation}

\begin{remark}
The Dirichlet boundary condition is imposed on the boundary.  We can expect the errors
incurred by imposing approximate boundary conditions to be small in areas of interest if the
truncated domain is sufficiently large\textsuperscript{\cite{1995CONVERGENCE}}.
\end{remark}

Taking an equi-distance partition with a spatial step size $h=2L/M, \Delta t=T/N$, we have grid points $x_i=-L+i*h, y_j=-L+j*h, t^n=n \Delta t, $ for $i,j=0,\cdots,M,$ and $n=0,\cdots, N$. 

Let $U_{i,j}^{n}$ denote the approximate solution at $\left( t^{n},x_{i},y_{j}\right).$
An implicit scheme for equation \eqref{general model}
reads as, for $n=0, \cdots,N-1$,
\begin{equation}
\left\{
\begin{array}{l}
\delta _{t}U_{i,j}^{n+1}-\frac{\sigma _{1}^{2}\left( \delta
_{x}^{2}U_{i,j}^{n+1}\right) }{2}\delta _{x}^{2}U_{i,j}^{n+1}-\frac{\sigma
_{2}^{2}\left( \delta _{y}^{2}U_{i,j}^{n+1}\right) }{2}\delta
_{y}^{2}U_{i,j}^{n+1}-(b_{12}\delta
_{xy}U)_{i,j}^{n+1}=0,0<i,j<M, \\
U_{i,j}^{0}=\varphi (x_{i},y_{j}),\text{ }i,j=0,...,M, \\
U_{i,j}^{n+1}|_{\left( x_{i},y_{j}\right) \in \partial \Omega }=\phi \left(
t^{n+1},x_{i},y_{j}\right) ,%
\end{array}%
\right.  \label{implicit  correlation2}
\end{equation}%
where
\begin{eqnarray}
\delta _{t}U_{i,j}^{n+1} &=&\frac{U_{i,j}^{n+1}-U_{i,j}^{n}}{\Delta t},\text{
\ }\notag \\
\delta _{x}^{2}U_{i,j}^{n+1} &=&\frac{%
U_{i+1,j}^{n+1}-2U_{i,j}^{n+1}+U_{i-1,j}^{n+1}}{h^{2}},\text{ }\delta
_{y}^{2}U_{i,j}^{n+1}=\frac{U_{i,j+1}^{n+1}-2U_{i,j}^{n+1}+U_{i,j-1}^{n+1}}{%
h^{2}}, \notag\\
\text{ }(b_{12}\delta _{xy}U)_{i,j}^{n+1} &=&\max \left(
\overline{b_{12}}\delta _{xy}^{+}U_{i,j}^{n+1},%
\underline{b_{12}}\delta _{xy}^{-}U_{i,j}^{n+1}\right) ,
\label{cross-approx}\\
\delta _{xy}^{+}U_{i,j}^{n+1} &=&\frac{%
U_{i+1,j+1}^{n+1}+2U_{i,j}^{n+1}+U_{i-1,j-1}^{n+1}-\left(
U_{i+1,j}^{n+1}+U_{i-1,j}^{n+1}+U_{i,j+1}^{n+1}+U_{i,j-1}^{n+1}\right) }{%
2h^{2}},\label{xy+} \\
\delta _{xy}^{-}U_{i,j}^{n+1} &=&\frac{%
U_{i+1,j}^{n+1}+U_{i-1,j}^{n+1}+U_{i,j+1}^{n+1}+U_{i,j-1}^{n+1}-\left(
U_{i+1,j-1}^{n+1}+2U_{i,j}^{n+1}+U_{i-1,j+1}^{n+1}\right) }{2h^{2}}, \label{xy-}
\end{eqnarray}%
and
\begin{equation}
\sigma _{1}^{2}\left( s\right) =\left\{
\begin{array}{l}
\overline{\sigma _{1}}^{2}\text{\ \ \ if }s\geq 0, \\
\underline{\sigma _{1}}^{2}\text{ \ \ if }s<0,%
\end{array}%
\right. \text{, }\sigma _{2}^{2}\left( s\right) =\left\{
\begin{array}{l}
\overline{\sigma _{2}}^{2}\text{\ \ \ if }s\geq 0, \\
\underline{\sigma _{2}}^{2}\text{ \ \ if }s<0.%
\end{array}%
\right.  \label{optimal_sigma}
\end{equation}%

 \begin{remark}
     The approximation for the second-order cross-derivative in \eqref{cross-approx} plays a key role in obtaining a monotonic scheme. As is known, whether $\delta _{xy}^{+}U_{i,j}^{n+1}$ in \eqref{xy+} or $\delta _{xy}^{-}U_{i,j}^{n+1}$ in \eqref{xy-} is applied to approximate the second-order cross-derivative depends on the sign of $b_{12}$, but the sign of $b_{12}$ depends on the sign of $\delta _{xy}U_{i,j}^{n+1}$ from \eqref{b12} and \eqref{b-range}. The choice in \eqref{cross-approx} breaks this cycle of dilemmas. In the cases of $\delta _{xy}^{+}U_{i,j}^{n+1}<0$ and $\delta _{xy}^{-}U_{i,j}^{n+1}>0$, the choice in \eqref{cross-approx} yields $b_{12}*\delta _{xy}U_{i,j}^{n+1}<0$, which breaks the constraint $b_{12}*u_{xy} \geq 0$ from \eqref{b12}. However, with this choice, the scheme \eqref{implicit  correlation2} is monotonic and works well. In fact, in the cases of $\delta _{xy}^{+}U_{i,j}^{n+1}<0$ and $\delta _{xy}^{-}U_{i,j}^{n+1}>0$, we have $u_{xy}\approx 0$, which means that the second-order cross-derivative is ignorable.
 \end{remark}

Since \eqref{implicit  correlation2} is a nonlinear system, an inner iteration is needed to obtain the solution $U^n_{i,j}$  in each time step. Let $U_{i,j}^{n+1,k}$ be the $k^{th}$
estimate for $U_{i,j}^{n+1}$,   $U_{i,j}^{n+1,k+1}$  is given by the following Picard's iteration,
\begin{equation}
\left\{
\begin{array}{l}
\delta _{t}U_{i,j}^{n+1,k+1}-\frac{\sigma _{1}^{2}\left( \delta
_{x}^{2}U_{i,j}^{n+1,k}\right) }{2}\delta _{x}^{2}U_{i,j}^{n+1,k+1}-\frac{%
\sigma _{2}^{2}\left( \delta _{y}^{2}U_{i,j}^{n+1,k}\right) }{2}\delta
_{y}^{2}U_{i,j}^{n+1,k+1}  -\left( b_{12}\right)_{i,j} ^{k}\delta _{xy}^{\alpha_k}U_{i,j}^{n+1,k+1}=0, \\
U_{i,j}^{0}=\varphi (x_{i},y_{j}), \\
U_{i,j}^{n+1,k+1}|_{\left( x_{i},y_{j}\right) \in \partial \Omega }=\phi
\left( t^{n+1},x_{i},y_{j}\right) ,%
\end{array}%
\right.  \label{wxpand22}
\end{equation}%
with $U_{i,j}^{n+1,0}=U_{i,j}^{n},$ where
\begin{equation}
\left( b_{12}\right)_{i,j} ^{k}=\left\{
\begin{array}{l}
\overline{b_{12}}, \text{ \ \ if\ }\overline{b_{12}}\delta _{xy}^{+}U_{i,j}^{n+1,k}\geq \underline{b_{12}}\delta _{xy}^{-}U_{i,j}^{n+1,k}, \\
\underline{b_{12}}, \text{ \ \ if\ }\overline{b_{12}}\delta _{xy}^{+}U_{i,j}^{n+1,k} < \underline{b_{12}}\delta _{xy}^{-}U_{i,j}^{n+1,k},%
\end{array}%
\right.  \label{wxpand32}
\end{equation}%
and
\begin{equation}
\alpha_k =\left\{
\begin{array}{l}
+\text{ \ \ \ if }\left( b_{12}\right) ^{k}=\overline{%
b_{12}}, \\
-\text{ \ \ \ if }\left( b_{12}\right) ^{k}=\underline{%
b_{12}}.%
\end{array}%
\right.  \label{wxpand222}
\end{equation}

In the next section, we discuss the convergence of the iterative scheme \eqref{wxpand22} and some theoretical convergence issues for
the discrete scheme \eqref{implicit  correlation2}. 

\section{Numerical analysis}

\label{sec:4}
 The implicit scheme \eqref{wxpand22} leads to a nonlinear
algebraic system which must be solved by an
inner iteration at each time step. In this section, we first prove the
convergence of the inner iteration and then check the properties of consistence, stability, and monotonicity.

\subsection{ Convergence of inner iteration}

 
We first present an assumption,  then prove the convergence of the inner iteration. 
 
\begin{assumption}
\label{assume} The covariance matrix $\left(
\begin{array}{ll}
\text{ }\sigma _{1}^{2} & b_{12} \\
b_{12} & \text{ }\sigma _{2}^{2}\end{array}\right) $ is a diagonally dominated, where $\sigma _{1}^{2}\in \left\{
\underline{\sigma _{1}}^{2},\overline{\sigma _{1}}^{2}\right\}$, $\sigma
_{2}^{2}\in \left\{ \underline{\sigma _{2}}^{2},\overline{\sigma _{2}}^{2} \right\}$ and $b_{12}\in \left\{  \underline{b_{12}},\overline{b_{12}}\right\}$.
\end{assumption}

\begin{proposition}
\textrm{(Maximum principle)}\label{EP} If covariance matrix $\left(
\begin{array}{ll}
\text{ }\sigma _{1}^{2} & b_{12} \\
b_{12} & \text{ }\sigma _{2}^{2}\end{array}\right) $ is diagonally dominated, and $\{U_{i,j}^{n}\}$ satisfies  
\begin{equation}
\mathbbm{L}_{h}U_{i,j}^{n}\equiv \delta _{t}U_{i,j}^{n}-\frac{\sigma _{1}^{2}}{2%
}\delta _{x}^{2}U_{i,j}^{n}-\frac{\sigma _{2}^{2}}{2}\delta
_{y}^{2}U_{i,j}^{n}-b_{12}\delta _{xy}^{\alpha
}U_{i,j}^{n}\geq 0 (\leq 0), \ n=1,...,N,\ 0<i,j<M,    \label{linear eq}
\end{equation}%
where
\begin{eqnarray} \label{rule-of-alpha}
\alpha &=&\left\{
\begin{array}{l}
+\text{ \ \ if }b_{12}\geq 0, \\
-\text{ \ \ if }b_{12}<0,%
\end{array}%
\right.
\end{eqnarray}%
 then the  minimum  (maximum) of $\{U_{i,j}^{n}\}$ can only be achieved at the initial or boundary points, unless $\{U_{i,j}^{n}\}$ is constant.
\end{proposition}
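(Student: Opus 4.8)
The plan is to rewrite the linear difference operator $\mathbbm{L}_h$ of \eqref{linear eq} in ``positive type'' (monotone) form and then run the usual discrete-maximum-principle argument. First I would expand $\mathbbm{L}_h U_{i,j}^n$ and collect the coefficients of the ten stencil values $U_{i,j}^{n-1}$, $U_{i,j}^{n}$, $U_{i\pm1,j}^{n}$, $U_{i,j\pm1}^{n}$ and the two ``diagonal'' points singled out by $\alpha$ (namely $U_{i+1,j+1}^n,U_{i-1,j-1}^n$ when $\alpha=+$, and $U_{i+1,j-1}^n,U_{i-1,j+1}^n$ when $\alpha=-$). The role of the sign rule \eqref{rule-of-alpha} is that in \emph{both} branches the term $-b_{12}\,\delta_{xy}^{\alpha}U_{i,j}^n$ puts weight $-|b_{12}|/(2h^2)$ on each of those two diagonal points and weight $+|b_{12}|/(2h^2)$ on each of the four axis neighbours, partially cancelling the $-\sigma_1^2/(2h^2)$ (resp.\ $-\sigma_2^2/(2h^2)$) coming from $-\frac{1}{2}\sigma_1^2\delta_x^2$ (resp.\ $-\frac{1}{2}\sigma_2^2\delta_y^2$). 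Consequently the coefficient of $U_{i\pm1,j}^n$ is $-(\sigma_1^2-|b_{12}|)/(2h^2)$, that of $U_{i,j\pm1}^n$ is $-(\sigma_2^2-|b_{12}|)/(2h^2)$, that of each diagonal point is $-|b_{12}|/(2h^2)$, and that of $U_{i,j}^{n-1}$ is $-1/\Delta t$. Hence \emph{all} off-diagonal coefficients are $\le0$ exactly when $\sigma_1^2\ge|b_{12}|$ and $\sigma_2^2\ge|b_{12}|$, i.e.\ exactly under the diagonal-dominance Assumption \ref{assume}; and since $\mathbbm{L}_h$ annihilates constants, the coefficient of $U_{i,j}^n$ equals the sum of the magnitudes of the off-diagonal coefficients. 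So one can write
\begin{equation*}
\mathbbm{L}_h U_{i,j}^n=\sum_{Q}c_Q\bigl(U_{i,j}^n-U_Q\bigr),
\end{equation*}
where $Q$ runs over the stencil neighbours of $(i,j,n)$, every $c_Q\ge0$, and $c_{(i,j,n-1)}=1/\Delta t>0$.

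With this representation the argument is routine. Suppose the grid minimum $m$ of $\{U_{i,j}^n\}$ is attained at an interior space--time node $P=(i_0,j_0,n_0)$ with $n_0\ge1$ and $0<i_0,j_0<M$. Then $0\le\mathbbm{L}_h U_P=\sum_Q c_Q(m-U_Q)\le0$ since every $U_Q\ge m$, so each term vanishes and $U_Q=m$ at every stencil neighbour $Q$ with $c_Q>0$ --- in particular at the time-predecessor $(i_0,j_0,n_0-1)$, whose coefficient is $1/\Delta t>0$. Iterating this backward in time and spreading sideways through the spatial stencil graph --- which is connected already along the axis directions, with every node of $\partial\Omega$ adjacent to an interior node --- forces $U\equiv m$ on all nodes of levels $0,\dots,n_0$. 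Thus the minimum is in any case attained at an initial node or a boundary node; and if it is also attained at an interior node of the final level $n_0=N$ the whole grid is covered, so $U$ is constant. The maximum statement follows by applying the same reasoning to $-U$, using the linearity of $\mathbbm{L}_h$.

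The only genuine work is the first step: one must verify the coefficient signs with care in both branches $\alpha=\pm$ of \eqref{rule-of-alpha}, and observe that ``all off-diagonal coefficients $\le0$'' is precisely Assumption \ref{assume} and nothing stronger --- notably no CFL-type constraint on $\Delta t/h^2$ is required, since the implicitness of the scheme keeps the whole $1/\Delta t$ term on the diagonal. The minimum-attained computation and the connectivity/propagation step are standard; the only mild care needed there is that the stencil graph on the \emph{closed} grid be connected, so that the minimum really does reach the parabolic boundary.
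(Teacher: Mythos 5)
Your proof is correct and takes essentially the same route as the paper: your coefficient bookkeeping is exactly the verification, which the paper states as ``$A$ is an M-matrix'' and leaves as ``easy to check,'' that under the diagonal-dominance assumption the implicit operator has positive-type stencil form (nonpositive neighbour weights, strictly positive diagonal boosted by $1/\Delta t$, constants annihilated), after which the standard propagation argument gives the maximum principle, with no CFL restriction. The only minor caveat is that the sideways spreading invoked for the ``unless constant'' rider needs the axis coefficients to be strictly positive (strict dominance of the covariance matrix); the weak form of the principle, which is all that is used later in the paper, already follows from the strictly positive time-predecessor coefficient alone, as you note.
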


Re-formulate the right-hand side of system \eqref{linear eq} into an operator form as
$$\mathbbm{L}_{h}U^{n} = \left(\frac{1}{\Delta t} I -\frac{\sigma _{1}^{2}}{2%
}\delta _{x}^{2}-\frac{\sigma _{2}^{2}}{2}\delta
_{y}^{2}-b_{12}\delta _{xy}^{\alpha }\right )U^{n} - \frac{1}{\Delta t} U^{n-1}  \equiv A U^n - \frac{1}{\Delta t} U^{n-1}. $$
It is easy to check that, thanks to the appropriate choice of the approximation to the second-order cross-derivative,  $A=(a_{ij})$ is an M-matrix, that is, $a_{ii}>0, a_{ij}
\leq 0$, for $i\not=j$, and $\sum_{j\not =i} |a_{ij}|< a_{ii}$. Then the above proposition follows.

\begin{theorem}
\label{CII2} \textrm{(Convergence of  inner iteration)} If Assumption \ref%
{assume} holds, then for any initial guess $U^{n+1,0}$, the
iterative sequence $\{ U^{n+1,k}\}_{k>0}$in \eqref {wxpand22} is bounded and monotonically increasing, so  converges to the unique
solution to \eqref{implicit  correlation2}.  
\end{theorem}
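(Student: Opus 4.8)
The plan is to analyze the iteration \eqref{wxpand22} by showing three things in sequence: (i) each iterate $U^{n+1,k+1}$ is well-defined, i.e.\ the linear system solved at step $k+1$ has a unique solution; (ii) the sequence $\{U^{n+1,k}\}_{k\ge 0}$ is monotonically increasing in $k$; and (iii) it is bounded above. Together with the fact that any fixed point of the iteration solves \eqref{implicit correlation2}, and that \eqref{implicit correlation2} has at most one solution, this yields convergence to the unique solution.

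For (i), I would observe that at each iteration the coefficients $\sigma_1^2(\cdot)$, $\sigma_2^2(\cdot)$ are frozen from the previous iterate, and $(b_{12})^k_{i,j}$ together with the stencil index $\alpha_k$ are selected by \eqref{wxpand32}--\eqref{wxpand222}; hence the operator acting on $U^{n+1,k+1}$ is exactly the linear operator $A$ appearing after Proposition~\ref{EP}, with a fixed admissible choice of $(\sigma_1^2,\sigma_2^2,b_{12})$ and the matching $\alpha$ from \eqref{rule-of-alpha}. Under Assumption~\ref{assume}, $A$ is an M-matrix (strictly diagonally dominant with positive diagonal and nonpositive off-diagonals), so it is invertible and, crucially, $A^{-1}\ge 0$ entrywise. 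This monotonicity of $A^{-1}$ is the workhorse for the rest.

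For (ii), the key step is the comparison $U^{n+1,k+1}\ge U^{n+1,k}$. I would proceed by induction on $k$. The base case $U^{n+1,1}\ge U^{n+1,0}=U^n$ should follow from applying the operator at level $k=0$ to $U^n$ and using that the defect $\mathbbm{L}_h U^n$ has the right sign together with $A^{-1}\ge 0$. For the inductive step, subtract the equation at level $k$ from that at level $k+1$. The difference $W:=U^{n+1,k+1}-U^{n+1,k}$ satisfies $A_{k}W = (\text{terms coming from the change of frozen coefficients between levels } k-1 \text{ and } k)$, where $A_k$ is the M-matrix built from the level-$k$ coefficient choices. Here I expect the main obstacle: one must show the right-hand side is nonnegative. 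This is where the specific choices in \eqref{wxpand32}--\eqref{wxpand222} matter — the rule picks $(b_{12})^k$ to be the \emph{maximizer} of $\max(\overline{b_{12}}\delta^+_{xy},\underline{b_{12}}\delta^-_{xy})$, and similarly $\sigma_i^2(\cdot)$ realizes the $\sup$ in \eqref{optimal_sigma}; so replacing level-$(k-1)$ coefficients by level-$k$ coefficients, when acting on the already-computed $U^{n+1,k}$, can only increase the value of the discrete operator. Combined with $A_k^{-1}\ge 0$, this gives $W\ge 0$. I would write this carefully, handling the cross term and the two second-difference terms by the same "sup is attained, so the frozen choice underestimates" argument.

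For (iii), boundedness above: the constant function equal to $\max(\|\varphi\|_\infty,\|\phi\|_\infty)$ (or any sufficiently large constant, using constant-preservation of the scheme) is a supersolution of each linear subproblem, so by the M-matrix comparison principle it dominates every $U^{n+1,k}$; alternatively one invokes Proposition~\ref{EP} directly. A monotone bounded sequence in $\mathbb{R}^{(M-1)^2}$ converges; call the limit $U^{n+1}$. Passing to the limit in \eqref{wxpand22} is routine once one checks that the piecewise-constant selections $\sigma_i^2(\cdot)$, $(b_{12})^k$, $\alpha_k$ stabilize along the sequence — they take finitely many values and the selection rules are "closed" at the limit (the $\ge$ in \eqref{wxpand32} and \eqref{optimal_sigma} is chosen precisely so the limit inherits a consistent choice), so the limit satisfies \eqref{implicit correlation2}. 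Finally, uniqueness of the solution to \eqref{implicit correlation2} follows from the comparison principle of Proposition~\ref{EP} applied to the difference of two solutions (writing the nonlinear operator as a pointwise sup of linear M-matrix operators and using that $\sup$ preserves the ordering), so the whole sequence converges to that unique solution.
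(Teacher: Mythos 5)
Your proposal follows essentially the same route as the paper's proof: freeze the coefficients from the previous iterate, use Assumption \ref{assume} to get an M-matrix structure (the maximum principle of Proposition \ref{EP}), show $W^{k}=U^{n+1,k+1}-U^{n+1,k}\ge 0$ by observing that the selection rule \eqref{wxpand32} (and likewise \eqref{optimal_sigma} for $\sigma_1^2,\sigma_2^2$) picks the \emph{maximizing} coefficients at $U^{n+1,k}$, so switching from the level-$(k-1)$ to the level-$k$ coefficients acting on $U^{n+1,k}$ produces a nonnegative residual, and then conclude by monotone-plus-bounded convergence with the $l_\infty$ bound from the maximum principle. The one misstep is your base case: the claim $U^{n+1,1}\ge U^{n+1,0}=U^{n}$ has no justification --- for an arbitrary initial guess (or even for $U^{n+1,0}=U^n$) the defect $\mathbbm{L}_h U^{n+1,0}$ carries no distinguished sign, and the inequality is false in general. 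Fortunately it is also unnecessary: your ``inductive step'' never uses the induction hypothesis (the sign of the right-hand side comes solely from the maximizing property of the selections at $U^{n+1,k}$), so it is a direct argument valid for every $k\ge 1$, which is exactly the monotonicity of $\{U^{n+1,k}\}_{k>0}$ asserted by the theorem and proved in the paper. Beyond that, your extra care with well-posedness of each linear solve, with passing to the limit in \eqref{wxpand22} (note the selections need not literally stabilize, but since they range over finitely many uniformly bounded linear operators and $U^{n+1,k+1}-U^{n+1,k}\to 0$, the limit satisfies \eqref{implicit correlation2} by continuity of the max), and with uniqueness via the comparison/maximum principle fills in details the paper leaves implicit; the paper also simplifies by treating only the cross term (WLOG no $\sigma$-uncertainty), whereas you run the same argument for all three coefficient selections, which is fine.
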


\begin{proof}
Without loss of generality (WLOG), we   assume that there is no uncertainty in ${\sigma }_{1}^{2}$ and $%
{\sigma }_{2}^{2},$ and pay more attentions on the second-order cross-derivative. 
Denote by $\overline{U}^k=U^{n+1,k}$ and let $%
W^{k}=\overline{U}^{k+1}-\overline{U}^{k},$ $k\geq 1,$ then $W_{i,j}^{k}$
satisfies the following difference equation
\begin{equation}
\left\{
\begin{array}{l} 
\frac{W_{i,j}^{k}}{\Delta t} -  \frac{\sigma_{1}^{2}}{2} \delta _{x}^{2}W_{i,j}^{k}-\frac{\sigma_{2}^{2}}{2}   \delta
_{y}^{2}W_{i,j}^{k}-\left(( b_{12})_{i,j}^{k}\delta
_{xy}^{\alpha _{k}}\overline{U}^{k+1}_{i,j}- (b_{12})_{i,j}^{k-1}\delta
_{xy}^{\alpha _{k-1}}\overline{U}^{k}_{i,j}\right)=0, \\
W_{i,j}^{k}|_{(x_{i},y_{j})\in \partial \Omega }=0.%
\end{array}%
\right.  \label{mono1}
\end{equation}%

 If $( b_{12})_{i,j}^{k}=( b_{12})_{i,j}^{k-1}$, for example, equals to $\overline{b_{12}}$, we have 
 \begin{equation} \label{iter1}
\frac{W_{i,j}^{k}}{\Delta t} -  \frac{\sigma_{1}^{2}}{2} \delta _{x}^{2}W_{i,j}^{k}-\frac{\sigma_{2}^{2}}{2}   \delta
_{y}^{2}W_{i,j}^{k}-  \overline{b_{12}}\delta
_{xy}^{+}W^{k}_{i,j}=0.\end{equation}
  If $( b_{12})_{i,j}^{k}\not=( b_{12})_{i,j}^{k-1}$, for example, $$
\left( b_{12}\right)_{i,j} ^{k} =\underline{b_{12}},  \ \ 
\left( b_{12}\right)_{i,j} ^{k-1} =\overline{b_{12}},
$$
which means that, from \eqref{wxpand32}, $\underline{b_{12}}\delta _{xy}^{-}\overline{%
U}_{ij}^{k} \geq \overline{b_{12}}\delta _{xy}^{+}\overline{U}_{ij}^{k}$, 
 then we have  \begin{equation} \label{iter2}
\frac{W_{i,j}^{k}}{\Delta t} -  \frac{\sigma_{1}^{2}}{2} \delta _{x}^{2}W_{i,j}^{k}-\frac{\sigma_{2}^{2}}{2}   \delta
_{y}^{2}W_{i,j}^{k}-  \underline{b_{12}}\delta
_{xy}^{-}W^{k}_{i,j}=\underline{b_{12}}\delta _{xy}^{-}\overline{%
U}_{ij}^{k}-\overline{b_{12}}\delta _{xy}^{+}\overline{U%
}_{ij}^{k} \geq 0.\end{equation}
Otherwise, if  $$
\left( b_{12}\right)_{i,j} ^{k} =  \overline{b_{12}},  \ \ 
\left( b_{12}\right)_{i,j} ^{k-1} =\underline{b_{12}},
$$
which means that, from \eqref{wxpand32},   $ \overline{b_{12}}\delta _{xy}^{+}\overline{U}_{ij}^{k} \geq \underline{b_{12}}\delta _{xy}^{-}\overline{%
U}_{ij}^{k}$, 
 then we have \begin{equation} \label{iter3}
\frac{W_{i,j}^{k}}{\Delta t} -  \frac{\sigma_{1}^{2}}{2} \delta _{x}^{2}W_{i,j}^{k}-\frac{\sigma_{2}^{2}}{2}   \delta
_{y}^{2}W_{i,j}^{k}-  \overline{b_{12}}\delta
_{xy}^{+}W^{k}_{i,j}=\overline{b_{12}}\delta _{xy}^{+}\overline{%
U}_{ij}^{k}-\underline{b_{12}}\delta _{xy}^{-}\overline{U%
}_{ij}^{k} \geq 0.\end{equation}
Combining \eqref{iter1}-\eqref{iter3}, we have, for any $ 0<i,j<M,$
$$\frac{W_{i,j}^{k}}{\Delta t} -  \frac{\sigma_{1}^{2}}{2} \delta _{x}^{2}W_{i,j}^{k}-\frac{\sigma_{2}^{2}}{2}   \delta
_{y}^{2}W_{i,j}^{k}-  (b_{12})^k_{i,j}\delta
_{xy}^{\alpha_k}W^{k}_{i,j} \geq 0, $$
subject to the boundary condition $W^k_{i,j}=0$, for $(x_i,y_j)\in \partial \Omega$.
From the maximum principle, we have $W^k_{i,j}\geq 0$, for $ 0<i,j<M$, that is, $\overline{U}^{k+1}_{i,j}\geq \overline{U}^{k}_{i,j}, k\geq 1$.

So $\{\overline{U}^{k}\}_{k>0}$ is a monotonic increasing sequence. 
Now we check the boundedness of the sequence. From Proposition \ref{EP}, the maximum principle is valid for the system \eqref{wxpand22}. It follows that
\begin{equation}
\left\vert \left\vert \overline{U}^{k+1}\right\vert \right\vert _{\infty
}\leq \max\left( \left\vert \left\vert \varphi \right\vert \right\vert
_{\infty }, \max_n \left\vert \left\vert \phi^n \right\vert \right\vert _{\infty, \partial \Omega}\right).
\end{equation}%
Consequently, as a monotonic and bounded sequence,  $ \overline{U}^{k} $ converges.
\end{proof}

\subsection{Monotonicity and Convergence of implicit scheme \texorpdfstring{\eqref{implicit correlation2}}{implicit scheme}}

From the work of Barles and Souganidis\textsuperscript{	%
\cite{G1991Convergence}}, we know that numerical solution of \eqref{implicit correlation2} converges to the viscosity solution of the fully nonlinear
PDE \eqref{general model} if the scheme \eqref{implicit correlation2} is consistent, stable (in the $%
l_{\infty }$ norm) and monotone.  




\begin{lemma}
\label{LConsistency2}\textrm{(Consistency) }The implicit
scheme \eqref{implicit  correlation2} is consistent.
\end{lemma}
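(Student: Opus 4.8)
The plan is to verify the classical pointwise consistency condition underlying the Barles and Souganidis framework: for every $v\in C^{\infty}$ and every interior node, the left-hand side of \eqref{implicit correlation2} applied to $v$ converges, as $h,\Delta t\to 0$, to the left-hand side of the PDE \eqref{general model} evaluated at that node. Since each difference operator in \eqref{implicit correlation2} ($\delta_t$, $\delta_x^2$, $\delta_y^2$, $\delta_{xy}^{\pm}$) annihilates constants, shifting a grid function by a constant $\xi$ leaves the scheme unchanged, so the $\xi$-perturbation that appears in the Barles--Souganidis consistency requirement is trivially accommodated and it suffices to Taylor-expand a smooth function.

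First I would record the elementary truncation estimates obtained by expanding $v$ about $(t^{n+1},x_i,y_j)$: the backward time quotient gives $\delta_t v_{i,j}^{n+1}=v_t+O(\Delta t)$, the standard central second differences give $\delta_x^2 v_{i,j}^{n+1}=v_{xx}+O(h^2)$ and $\delta_y^2 v_{i,j}^{n+1}=v_{yy}+O(h^2)$, and — the estimate specific to the stencils used here — a short expansion of \eqref{xy+}--\eqref{xy-} shows that both seven-point cross operators are second-order accurate, $\delta_{xy}^{+}v_{i,j}^{n+1}=v_{xy}+O(h^2)$ and $\delta_{xy}^{-}v_{i,j}^{n+1}=v_{xy}+O(h^2)$, with the $O(\cdot)$ constants controlled by bounds on the third and fourth derivatives of $v$ near the node.

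The one genuine difficulty is that the selectors $\sigma_1^2(\cdot),\sigma_2^2(\cdot)$ in \eqref{optimal_sigma} and $b_{12}(\cdot)$ in \eqref{b12} are discontinuous at the origin, so the limit cannot be passed through them at points where $v_{xx}$, $v_{yy}$ or $v_{xy}$ vanishes. The key observation that removes this obstruction is that the relevant \emph{products} are continuous, piecewise-linear functions, indeed maxima of linear maps: since $\overline{\sigma_1}^2\ge\underline{\sigma_1}^2$ one has $\tfrac{\sigma_1^2(s)}{2}s=\max\!\big(\tfrac{\overline{\sigma_1}^2}{2}s,\tfrac{\underline{\sigma_1}^2}{2}s\big)$, likewise for the $y$-term, and $b_{12}(s)s=\max(\overline{b_{12}}s,\underline{b_{12}}s)$; moreover the scheme's cross term \eqref{cross-approx} is by definition $\max(\overline{b_{12}}\,\delta_{xy}^{+}U,\underline{b_{12}}\,\delta_{xy}^{-}U)$. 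Using the $1$-Lipschitz bound $|\max(a_1,a_2)-\max(b_1,b_2)|\le\max(|a_1-b_1|,|a_2-b_2|)$, I would then estimate
\[
\Big|\tfrac{\sigma_1^2(\delta_x^2 v)}{2}\delta_x^2 v-\tfrac{\sigma_1^2(v_{xx})}{2}v_{xx}\Big|\le\tfrac{\overline{\sigma_1}^2}{2}\,\big|\delta_x^2 v-v_{xx}\big|=O(h^2),
\]
and analogously $\big|(b_{12}\delta_{xy}v)_{i,j}^{n+1}-b_{12}(v_{xy})v_{xy}\big|\le\max\!\big(|\overline{b_{12}}|,|\underline{b_{12}}|\big)\max\!\big(|\delta_{xy}^{+}v-v_{xy}|,|\delta_{xy}^{-}v-v_{xy}|\big)=O(h^2)$, with the same argument for the $y$-term. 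Adding these to the time-difference estimate yields that the scheme operator applied to $v$ equals the PDE operator of \eqref{general model} at $(t^{n+1},x_i,y_j)$ plus a remainder that is $O(\Delta t+h^2)$, uniformly on compact sets; sending $h,\Delta t\to 0$ gives consistency.

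I expect the only subtle point to be exactly this handling of the discontinuous coefficient selectors, which is why I would foreground the reformulation of $\tfrac{\sigma_i^2(s)}{2}s$ and $b_{12}(s)s$ as maxima of linear functions together with the Lipschitz property of $\max$; the rest is routine Taylor bookkeeping, the only nonstandard piece of which is confirming the $O(h^2)$ accuracy of both cross stencils in \eqref{xy+}--\eqref{xy-}.
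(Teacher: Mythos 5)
Your argument is correct and is essentially the paper's own (one-line) proof made explicit: the paper simply notes that the difference quotients tend to the derivatives and that the $\sup$ (max) operation is continuous, which is exactly your rewriting of $\tfrac{\sigma_i^2(s)}{2}s$ and $b_{12}(s)s$ as maxima of linear functions combined with the Taylor truncation estimates, including the $O(h^2)$ accuracy of the stencils \eqref{xy+}--\eqref{xy-}. Your version additionally supplies the explicit $O(\Delta t+h^2)$ truncation bound and the constant-shift remark needed for the Barles--Souganidis consistency condition, but the route is the same.
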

\begin{proof}
  It is easy to check that the difference equation in \eqref{implicit  correlation2} tends to the G-equation \eqref{UVCM} as $h,\Delta t\rightarrow 0$, since the `$\sup$' operation is continuous.  
\end{proof}

 We now give the definition of monotonicity. Denote by $$g_{i,j} = g\left(U^{n+1}_{i,j}, U^n_{i,j}, \{ U^{n+1}_{k,l} \}_{(k,l)\in N_{i,j}}\right)$$ the left-hand side of the difference equation \eqref{implicit  correlation2}. Here, $N_{i,j}=\{(k,l)\not = (i,j) :\ |k-i|\leq 1, |l-j|\leq 1\}$ presents the set of all nearest-neighbor indexes of $(i,j)$.
 \begin{definition} \textrm{(Monotonicity)} \label{def-mono}
    The scheme \eqref{implicit  correlation2} is monotone if for all  $(i,j)$, \begin{equation}\label{g-diag}
        g_{i,j}\left(U^{n+1}_{i,j}+\epsilon^{n+1}_{i,j}, U^n_{i,j}, \{U^{n+1}_{k,l}\}_{(k,l)\in N_{i,j}}\right)\geq  g_{i,j}\left(U^{n+1}_{i,j}, U^n_{i,j}, \{U^{n+1}_{k,l}\}_{(k,l)\in N_{i,j}}\right), \ \ \forall \epsilon^{n+1}_{i,j}\geq 0, 
    \end{equation}
    and 
    \begin{equation} \label{g-off-diag}
      g_{i,j}\left(U^{n+1}_{i,j}, U^n_{i,j}+\epsilon^{n}_{i,j}, \{U^{n+1}_{k,l}+\epsilon^{n+1}_{k,l}\}_{(k,l)\in N_{i,j}}\right)\leq  g_{i,j}\left(U^{n+1}_{i,j}, U^n_{i,j}, \{U^{n+1}_{k,l}\}_{(k,l)\in N_{i,j}}\right), \  \forall \epsilon^{n}_{i,j}, \epsilon^{n+1}_{k,l}\geq 0. 
    \end{equation}
 \end{definition}
 
\begin{lemma}\label{lem-mono}
\textrm{(Monotonicity)} If Assumption \ref{assume} holds, then the
implicit scheme \eqref{implicit  correlation2} is monotone.
\end{lemma}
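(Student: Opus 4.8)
The plan is to verify the two defining inequalities of Definition \ref{def-mono} by examining how the left-hand side functional $g_{i,j}$ depends on each of its arguments, exploiting that the nonlinearity enters only through the sign-selection of $\sigma_1^2, \sigma_2^2$ and $b_{12}$, together with Assumption \ref{assume}. First I would write out $g_{i,j}$ explicitly from \eqref{implicit correlation2} and \eqref{cross-approx}, noting that
$$
g_{i,j} = \frac{U^{n+1}_{i,j}-U^n_{i,j}}{\Delta t} - \frac{\sigma_1^2(\delta_x^2 U^{n+1}_{i,j})}{2}\delta_x^2 U^{n+1}_{i,j} - \frac{\sigma_2^2(\delta_y^2 U^{n+1}_{i,j})}{2}\delta_y^2 U^{n+1}_{i,j} - \max\left(\overline{b_{12}}\,\delta_{xy}^+ U^{n+1}_{i,j},\ \underline{b_{12}}\,\delta_{xy}^- U^{n+1}_{i,j}\right).
$$
The key observation I would use repeatedly is the envelope-type identity: for the piecewise-linear function $s\mapsto \tfrac12\sigma_1^2(s)s$, which equals $\tfrac12\overline{\sigma_1}^2 s$ for $s\ge0$ and $\tfrac12\underline{\sigma_1}^2 s$ for $s<0$, one has $\tfrac12\sigma_1^2(s)s = \max\bigl(\tfrac12\overline{\sigma_1}^2 s,\ \tfrac12\underline{\sigma_1}^2 s\bigr)$ since $\overline{\sigma_1}^2\ge\underline{\sigma_1}^2\ge0$; hence each nonlinear term in $g_{i,j}$ is a maximum of linear functions of the grid values, and the whole discrete operator $-g_{i,j}$ is a sup over a finite family of the linear operators $\mathbbm{L}_h$ appearing in Proposition \ref{EP}.

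Next I would establish \eqref{g-diag}: increasing only the center value $U^{n+1}_{i,j}$ by $\epsilon\ge0$. In each linear operator in the family, the coefficient of $U^{n+1}_{i,j}$ is exactly the diagonal entry $a_{ii}$ of the corresponding $A$, which by the M-matrix property noted after Proposition \ref{EP} is strictly positive (this is where Assumption \ref{assume}, i.e. diagonal dominance of every admissible covariance matrix, is essential — it guarantees the $\delta_{xy}^\pm$ contributions cannot overwhelm the diagonal). Since $g_{i,j}$ is the negative of a maximum of these linear forms... more carefully, since $g_{i,j}$ itself is a \emph{sum} of $\tfrac{1}{\Delta t}U^{n+1}_{i,j}$-type terms minus maxima of linear forms whose coefficients of $U^{n+1}_{i,j}$ are all $\le$ the corresponding positive diagonal, I would argue that $g_{i,j}$ is non-decreasing in $U^{n+1}_{i,j}$ by differentiating each branch and checking the resulting slope is $\ge0$; equivalently, write $g_{i,j}(\cdot + \epsilon) - g_{i,j}(\cdot) = \min$ over active branches of $(\text{positive diagonal coefficient})\cdot\epsilon \ge 0$, using that $-\max(f,g)$ is concave so its increment in a variable where both $f,g$ are non-increasing is non-negative. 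For \eqref{g-off-diag}, I would increase $U^n_{i,j}$ and the neighbor values $\{U^{n+1}_{k,l}\}_{(k,l)\in N_{i,j}}$ by nonnegative amounts; the $U^n_{i,j}$ term enters $g_{i,j}$ with coefficient $-1/\Delta t<0$, and each off-diagonal coefficient $a_{kl}$ (for the neighbor values) in every admissible linear operator is $\le0$, so every branch of the max is non-decreasing in those neighbor values, whence $-\max(\cdots)$ is non-increasing; summing, $g_{i,j}$ decreases. Here I must be careful that $\delta_{xy}^+$ and $\delta_{xy}^-$ have off-diagonal signs of both signs individually, but the crucial point — already used implicitly in Proposition \ref{EP} — is that after combining with the $\delta_x^2,\delta_y^2$ terms and invoking diagonal dominance, the \emph{total} coefficient of each neighbor in each admissible branch is $\le0$; I would restate and use exactly that fact.

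The main obstacle I anticipate is the bookkeeping for the cross-derivative coefficients: $\delta_{xy}^+$ assigns positive weight $+1/(2h^2)$ to $U_{i\pm1,j\pm1}$ and negative weight to $U_{i\pm1,j}, U_{i,j\pm1}$, while $\delta_{xy}^-$ does the reverse, so the sign of a given neighbor's coefficient in the combined operator $-\tfrac{\sigma_1^2}{2}\delta_x^2-\tfrac{\sigma_2^2}{2}\delta_y^2-b_{12}\delta_{xy}^\alpha$ depends on which branch $(b_{12},\alpha)$ is active. I would handle this by checking, branch by branch (there are only finitely many: the sign choices for $\delta_x^2U$, $\delta_y^2U$, and the $\max$ selecting $\overline{b_{12}}\delta_{xy}^+$ versus $\underline{b_{12}}\delta_{xy}^-$), that the associated matrix $A$ is exactly one of the M-matrices from Proposition \ref{EP} — this is precisely the content of the sentence "thanks to the appropriate choice of the approximation to the second-order cross-derivative, $A=(a_{ij})$ is an M-matrix" — and then invoke $a_{ii}>0$, $a_{kl}\le0$ for $(k,l)\ne(i,j)$ uniformly over all branches. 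Once that uniform sign structure is in hand, both monotonicity inequalities follow from the elementary fact that a minimum (resp. maximum) of affine functions with coefficients of the prescribed signs is monotone in the corresponding variables, and the proof concludes.
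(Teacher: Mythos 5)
Your proposal is correct, and it organizes the argument differently from the paper. You rewrite $g_{i,j}$ exactly as a pointwise minimum over the finitely many admissible branches $\left(\sigma_1^2,\sigma_2^2,(b_{12},\alpha)\right)$ of linear schemes, using the envelope identity $\tfrac12\sigma_i^2(s)\,s=\max\bigl(\tfrac12\overline{\sigma_i}^2 s,\ \tfrac12\underline{\sigma_i}^2 s\bigr)$ together with the definition \eqref{cross-approx}, and then verify once, branch by branch, that under Assumption \ref{assume} the combined coefficient of $U^{n+1}_{i,j}$ equals $\tfrac{1}{\Delta t}+\tfrac{\sigma_1^2+\sigma_2^2-|b_{12}|}{h^2}>0$ while every neighbor enters with combined coefficient $\le 0$ (corner weights $-|b_{12}|/(2h^2)$, edge weights $(|b_{12}|-\sigma_1^2)/(2h^2)$ and $(|b_{12}|-\sigma_2^2)/(2h^2)$) and $U^n_{i,j}$ enters with $-1/\Delta t$; monotonicity of each affine branch with this uniform sign pattern is then inherited by the minimum, which gives \eqref{g-diag} and \eqref{g-off-diag}. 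The paper instead perturbs the arguments directly and bounds the differences $T_1,\dots,T_6$ via the one-sided inequality that the selected optimizer dominates any other admissible coefficient, arriving at the same aggregate coefficients; both proofs thus rest on the same two ingredients (the max-of-linear structure and diagonal dominance), but your route is more structural, reuses the M-matrix observation stated after Proposition \ref{EP}, and extends verbatim to more branches, while the paper's direct estimate does not require writing the scheme as an exact minimum and produces the explicit quantitative bounds on $T$ and $T'$. Two small cautions, neither fatal: your side remark that $-\max(f,g)$ increases because both branches are non-increasing fails for the cross term in isolation (the branch $\overline{b_{12}}\,\delta^+_{xy}U$ has a positive coefficient of $U^{n+1}_{i,j}$), which you correctly repair by combining all terms within each branch before checking signs; and the claimed identity $g(\cdot+\epsilon)-g(\cdot)=\min(\text{coefficient})\cdot\epsilon$ should only be the inequality $g(\cdot+\epsilon)-g(\cdot)\ge \min_\theta c_\theta\,\epsilon\ge 0$, since the minimizing branch may switch.
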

\begin{proof}
    We first consider the perturbation on $U^{n+1}_{i,j}$, and denote it by $\widetilde{U}^{n+1}_{i,j}=U^{n+1}_{i,j}+\epsilon^{n+1}_{i,j}$, for $\epsilon^{n+1}_{i,j}\geq 0$. We also denote by $\widetilde{U}^{n+1}_{k,l}=U^{n+1}_{k,l}$, for $(k,l)\in N_{i,j}$.
    Then the difference between the two sides of the inequality \eqref{g-diag} is \begin{eqnarray}
     T&:=&    g_{i,j}\left(\widetilde{U}^{n+1}_{i,j}, U^n_{i,j}, \{\widetilde{U}^{n+1}_{k,l}\}_{(k,l)\in N_{i,j}}\right)- g_{i,j}\left(U^{n+1}_{i,j}, U^n_{i,j}, \{U^{n+1}_{k,l}\}_{(k,l)\in N_{i,j}}\right)\notag\\
     &=& \frac{\epsilon^{n+1}_{i,j}}{\Delta t} -\left( \frac{\widetilde{\sigma} _{1}^{2}  }{2}\delta _{x}^{2}\widetilde{U}_{i,j}^{n+1} - \frac{\widehat{\sigma} _{1}^{2}  }{2}\delta _{x}^{2}U_{i,j}^{n+1}\right)
     -\left( \frac{\widetilde{\sigma} _{2}^{2}  }{2}\delta _{y}^{2}\widetilde{U}_{i,j}^{n+1} - \frac{\widehat{\sigma} _{2}^{2}  }{2}\delta _{y}^{2}U_{i,j}^{n+1}\right)\notag\\
&&\ \   \ \  \ \ \  -\left( \widetilde{b_{12}}  \delta^{\tilde{\alpha}}
_{xy}\widetilde{U}_{i,j}^{n+1}- \widehat{b}_{12}\delta^\alpha
_{xy}U_{i,j}^{n+1}\right)\notag\\
&:=&  \frac{\epsilon^{n+1}_{i,j}}{\Delta t}+T_1+T_2+T_3,\label{T-diag}
    \end{eqnarray}
    where $\widetilde{\sigma}_1=\sigma_1(\delta _{x}^{2}\widetilde{U}_{i,j}^{n+1}), \widehat{\sigma}_1=\sigma_1(\delta _{x}^{2}{U}_{i,j}^{n+1})$, $\widetilde{\sigma}_2=\sigma_2(\delta _{y}^{2}\widetilde{U}_{i,j}^{n+1}), \widehat{\sigma}_2=\sigma_1(\delta _{y}^{2}{U}_{i,j}^{n+1})$ are defined by \eqref{optimal_sigma}, $ \widetilde{b_{12}}, \widehat{b}_{12}$ and  the index $\tilde{\alpha}$ and $\alpha$ are determined by the rule \eqref{cross-approx} and \eqref{rule-of-alpha}. 
    
    The term $T_1$ can be treated as \begin{eqnarray}
        T_1&=& -\frac{\widetilde{\sigma} _{1}^{2}  }{2}\left(\delta _{x}^{2}\widetilde{U}_{i,j}^{n+1} -\delta _{x}^{2}U_{i,j}^{n+1} \right)+ \left( \frac{\widehat{\sigma} _{1}^{2}  }{2}\delta _{x}^{2}U_{i,j}^{n+1} -\frac{\widetilde{\sigma} _{1}^{2}  }{2}\delta _{x}^{2} {U}_{i,j}^{n+1}\right)\notag\\
        &\geq & \frac{\widetilde{\sigma} _{1}^{2}}{h^2}\epsilon
_{i,j}^{n+1},
    \end{eqnarray}
    since $\frac{\widehat{\sigma} _{1}^{2} }{2} \delta _{x}^{2}U_{i,j}^{n+1} = \sup_{\sigma_1} \frac{\sigma _{1}^{2}  }{2} \delta _{x}^{2}U_{i,j}^{n+1}\geq \frac{\widetilde{\sigma} _{1}^{2}  }{2}\delta _{x}^{2} {U}_{i,j}^{n+1}$.
    Similarly, we have \begin{equation}
        T_2 \geq \frac{\widetilde{\sigma} _{2}^{2}}{h^2}\epsilon_{i,j}^{n+1}.
    \end{equation}
    We now turn to the term $T_3$. 
    \begin{eqnarray}
        T_3&=&  - \widetilde{b_{12}} ( \delta^{\tilde{\alpha}}
_{xy}\widetilde{U}_{i,j}^{n+1} -\delta^{\tilde{\alpha}}
_{xy}{U}_{i,j}^{n+1})+(\widehat{b}_{12}\delta^\alpha
_{xy}U_{i,j}^{n+1}- \widetilde{b_{12}}\delta^{\tilde{\alpha}}
_{xy}{U}_{i,j}^{n+1})\notag\\
&\geq& - \frac{|\widetilde{b_{12}}|}{h^2}\epsilon_{i,j}^{n+1},
    \end{eqnarray}
    where we have used the fact that $\widehat{b}_{12}\delta^\alpha
_{xy}U_{i,j}^{n+1} = \max\{\overline{b_{12}}\delta^+
_{xy}U_{i,j}^{n+1}, \underline{b_{12}}\delta^-
_{xy}U_{i,j}^{n+1} \}$ $\geq \widetilde{b_{12}}\delta^{\tilde{\alpha}}
_{xy}{U}_{i,j}^{n+1}$ and the definitions \eqref{xy+} and \eqref{xy-}.

Substituting the above three terms into \eqref{T-diag}, we get, by Assumption \eqref{assume}, that
$$ T \geq \epsilon_{i,j}^{n+1} \left(\frac{1}{\Delta t} + \frac{\widetilde{\sigma}^2_1+\widetilde{\sigma}^2_2-|\widetilde{b_{12}}|}{h^2}\right)\geq 0. $$ 
So \eqref{g-diag} is valid.

We now turn to the perturbation on $U_{k,l}^{n+1}$  and $U^n_{i,j}$. Denote by $\widetilde{U}^{n+1}_{k,l}=U^{n+1}_{k,l}+\epsilon^{n+1}_{k,l}$, for $\epsilon^{n+1}_{k,l}\geq 0$ and $(k,l)\in N_{i,j}$. We also denote by $\widetilde{U}^{n}_{i,j}=U^{n}_{i,j}+ \epsilon^n_{i,j}$, for $ \epsilon^n_{i,j}\geq 0$ and $\widetilde{U}^{n+1}_{i,j}=U^{n+1}_{i,j}$.
    Then the difference between the two sides of the inequality \eqref{g-off-diag} is \begin{eqnarray}
     T'&:=&    g_{i,j}\left(\widetilde{U}^{n+1}_{i,j}, \widetilde{U}^n_{i,j}, \{\widetilde{U}^{n+1}_{k,l}\}_{(k,l)\in N_{i,j}}\right)- g_{i,j}\left(U^{n+1}_{i,j}, U^n_{i,j}, \{U^{n+1}_{k,l}\}_{(k,l)\in N_{i,j}}\right)\notag\\
    &=& -\frac{\epsilon^{n}_{i,j}}{\Delta t} -\left( \frac{\widetilde{\sigma} _{1}^{2}  }{2}\delta _{x}^{2}\widetilde{U}_{i,j}^{n+1} - \frac{\widehat{\sigma} _{1}^{2}  }{2}\delta _{x}^{2}U_{i,j}^{n+1}\right)
     -\left( \frac{\widetilde{\sigma} _{2}^{2}  }{2}\delta _{y}^{2}\widetilde{U}_{i,j}^{n+1} - \frac{\widehat{\sigma} _{2}^{2}  }{2}\delta _{y}^{2}U_{i,j}^{n+1}\right)\notag\\
&&\ \   \ \  \ \ \  -\left( \widetilde{b_{12}}  \delta^{\tilde{\alpha}}
_{xy}\widetilde{U}_{i,j}^{n+1}- \widehat{b}_{12}\delta^\alpha
_{xy}U_{i,j}^{n+1}\right)\notag\\
&:=&  -\frac{\epsilon^{n}_{i,j}}{\Delta t}+T_4+T_5+T_6,\label{T-off-diag}
    \end{eqnarray}
   where $\widetilde{\sigma}_1=\sigma_1(\delta _{x}^{2}\widetilde{U}_{i,j}^{n+1}), \widehat{\sigma}_1=\sigma_1(\delta _{x}^{2}{U}_{i,j}^{n+1})$, $\widetilde{\sigma}_2=\sigma_2(\delta _{y}^{2}\widetilde{U}_{i,j}^{n+1}), \widehat{\sigma}_2=\sigma_1(\delta _{y}^{2}{U}_{i,j}^{n+1})$ are defined by \eqref{optimal_sigma}, $ \widetilde{b_{12}}, \widehat{b}_{12}$ and  the index $\tilde{\alpha}$ and $\alpha$ are determined by the rule \eqref{cross-approx} and \eqref{rule-of-alpha}.
    
    The term $T_4$ can be treated as \begin{eqnarray}
        T_4&=& (-\frac{\widetilde{\sigma} _{1}^{2}  }{2}\delta _{x}^{2}\widetilde{U}_{i,j}^{n+1} + \frac{\widehat{\sigma} _{1}^{2}  }{2}\delta _{x}^{2} \widetilde{U}_{i,j}^{n+1} )+\frac{\widehat{\sigma} _{1}^{2}  }{2} (\delta _{x}^{2}U_{i,j}^{n+1} - \delta _{x}^{2} \widetilde{U}_{i,j}^{n+1})\notag\\
        &\leq & -\frac{ \widehat{\sigma} _{1}^{2}}{2h^2} \sum_{|k-i|=1}\epsilon
_{k,j}^{n+1},
    \end{eqnarray}
    since $\frac{\widetilde{\sigma} _{1}^{2} }{2} \delta _{x}^{2}\widetilde{U}_{i,j}^{n+1} = \sup_{\sigma_1} \frac{\sigma _{1}^{2}  }{2} \delta _{x}^{2}\widetilde{U}_{i,j}^{n+1}\geq \frac{\widehat{\sigma} _{1}^{2}  }{2}\delta _{x}^{2} \widetilde{U}_{i,j}^{n+1}$.
    Similarly, we have \begin{equation}
        T_5 \leq -\frac{ \widehat{\sigma} _{2}^{2}}{2h^2} \sum_{|l-j|=1}\epsilon_{i,l}^{n+1}.
    \end{equation}
   The term $T_6$ can be treated as     \begin{eqnarray}
        T_6&=&  \left(- \widetilde{b_{12}}  \delta^{\tilde{\alpha}}_{xy}\widetilde{U}_{i,j}^{n+1} + \widehat{b}_{12} \delta^{{\alpha}}
_{xy}\widetilde{U}_{i,j}^{n+1}\right) + 
 \widehat{b}_{12}\left(
\delta^{{\alpha}}
_{xy}{U}_{i,j}^{n+1}- \delta^{{\alpha}}
_{xy}\widetilde{U}_{i,j}^{n+1}\right) 
\leq  \widehat{b}_{12}\left(
\delta^{{\alpha}}
_{xy}{U}_{i,j}^{n+1}- \delta^{{\alpha}}
_{xy}\widetilde{U}_{i,j}^{n+1}\right) \notag\\
&=&  -\frac{|\widehat{b}_{12}|}{2h^2}*\left((\epsilon_{i+1,j+1}^{n+1} +\epsilon_{i-1,j-1}^{n+1})*H(\widehat{b}_{12}) + (\epsilon_{i+1,j-1}^{n+1} +\epsilon_{i-1,j+1}^{n+1})*H(-\widehat{b}_{12})\right)\notag\\
& &  +\frac{|\widehat{b}_{12}|}{2h^2}\left(\sum_{|k-i|=1}\epsilon_{k,j}^{n+1}+ \sum_{|l-j|=1}\epsilon_{i,l}^{n+1}\right),
    \end{eqnarray}
    where  we have used  the fact that $\widetilde{b_{12}}  \delta^{\tilde{\alpha}}_{xy}\widetilde{U}_{i,j}^{n+1} = \max\{\overline{b_{12}}\delta^+
_{xy}\widetilde{U}_{i,j}^{n+1}, \underline{b_{12}}\delta^-
_{xy}\widetilde{U}_{i,j}^{n+1} \}$ $\geq \widehat{b}_{12}\delta^{{\alpha}}
_{xy}\widetilde{U}_{i,j}^{n+1}$ and the definitions \eqref{xy+} and \eqref{xy-}. $H(\cdot )$ is the Heaviside function.
 
Substituting the above three terms into \eqref{T-off-diag}, we get,  by Assumption \eqref{assume}, that
$$T'\leq  -\frac{ \widehat{\sigma} _{1}^{2}-|\widehat{b}_{12}|}{2h^2} \sum_{|k-i|=1}\epsilon_{k,j}^{n+1}-\frac{ \widehat{\sigma} _{2}^{2}-|\widehat{b}_{12}|}{2h^2} \sum_{|l-j|=1}\epsilon_{i,l}^{n+1}\leq 0. $$
So \eqref{g-off-diag} is true, and the scheme \eqref{implicit  correlation2} is monotone.
\end{proof}

\begin{lemma}
\label{S} \textrm{(Stability)} If Assumption \ref{assume} holds, then the
fully implicit scheme \eqref{implicit  correlation2} is stable, in
the sense that 
\begin{equation}
\max_{n}\left\vert \left\vert U^{n}\right\vert \right\vert _{\infty }\leq \max \left( \left\vert \left\vert \varphi \right\vert \right\vert _{\infty
},\max_{n}\left\vert \left\vert \phi ^{n}\right\vert \right\vert _{\infty, \partial\Omega }\right).  \label{uncondition2}
\end{equation}%
\end{lemma}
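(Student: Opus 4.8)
The plan is to derive the $l_\infty$ bound \eqref{uncondition2} from a discrete comparison principle for \eqref{implicit correlation2}, which itself follows from the monotonicity already established in Lemma \ref{lem-mono} together with one elementary structural property of the scheme. First note that, by Theorem \ref{CII2}, the system \eqref{implicit correlation2} has a unique solution $\{U^n\}$, so the claim is meaningful. Writing the interior equations in the compact form $g_{i,j}(U^{n+1}_{i,j},U^n_{i,j},\{U^{n+1}_{k,l}\}_{(k,l)\in N_{i,j}})=0$ of Definition \ref{def-mono}, one checks directly from \eqref{implicit correlation2}--\eqref{optimal_sigma} that adding a constant $c$ to $U^{n+1}_{i,j}$ and simultaneously to all of its nearest neighbours, while keeping $U^n_{i,j}$ fixed, changes $g_{i,j}$ by exactly $c/\Delta t$: the time-difference term picks up $c/\Delta t$, while $\delta_x^2$, $\delta_y^2$, $\delta_{xy}^{\pm}$ applied at $(i,j)$ see only the centre and its eight neighbours, all shifted equally, and are therefore unchanged. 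In particular every space--time constant is an exact interior solution of \eqref{implicit correlation2}.

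Next I would prove the comparison principle: if $\{U^n\}$ and $\{V^n\}$ both satisfy the interior equations of \eqref{implicit correlation2} with $U^0\le V^0$ on $\Omega$ and $U^n\le V^n$ on $\partial\Omega$ for every $n$, then $U^n\le V^n$ everywhere. Argue by induction on $n$, the base case being the hypothesis on the initial data. Assuming $U^n\le V^n$, set $M=\max_{i,j}(U^{n+1}_{i,j}-V^{n+1}_{i,j})$. If the maximum is attained on $\partial\Omega$ then $M\le 0$ by the boundary hypothesis. Otherwise it is attained at an interior node $(i_0,j_0)$, where $U^{n+1}_{i_0,j_0}=V^{n+1}_{i_0,j_0}+M$ and $U^{n+1}_{k,l}\le V^{n+1}_{k,l}+M$ for every $(k,l)$, while $U^n_{i_0,j_0}\le V^n_{i_0,j_0}$. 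Using first that $g_{i_0,j_0}$ is non-increasing in $U^n_{i_0,j_0}$ and in the neighbour values \eqref{g-off-diag}, and then the constant-covariance identity above with $c=M$, one gets
$$0=g_{i_0,j_0}(U)\ \ge\ g_{i_0,j_0}\bigl(V^{n+1}_{i_0,j_0}+M,\,V^n_{i_0,j_0},\,\{V^{n+1}_{k,l}+M\}\bigr)=g_{i_0,j_0}(V)+\frac{M}{\Delta t}=\frac{M}{\Delta t},$$
so again $M\le 0$; hence $U^{n+1}\le V^{n+1}$ on $\Omega$, and the induction closes. (Here $g_{i_0,j_0}(U)$ and $g_{i_0,j_0}(V)$ abbreviate the left-hand side of \eqref{implicit correlation2} evaluated along the two sequences.)

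Finally, put $C:=\max(\|\varphi\|_\infty,\max_n\|\phi^n\|_{\infty,\partial\Omega})$ and apply the comparison principle twice, taking for $\{V^n\}$ the space--time constants $C$ and $-C$; since these are exact interior solutions with $-C\le\varphi\le C$ on $\Omega$ and $-C\le\phi^n\le C$ on $\partial\Omega$, we obtain $-C\le U^n_{i,j}\le C$ for all $n,i,j$, which is exactly \eqref{uncondition2}. The only place Assumption \ref{assume} enters is, indirectly, through Lemma \ref{lem-mono}; granting monotonicity, the sole extra ingredient — and the point one must be careful about — is the constant-covariance identity, which is precisely where the conservative form \eqref{cross-approx}--\eqref{xy-} of the cross-derivative stencil is used, since a stencil that did not annihilate constants would invalidate the constant barriers. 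Alternatively one may bypass the comparison principle: once the optimal coefficients selected by $U^{n+1}$ are frozen, \eqref{implicit correlation2} reduces to the linear M-matrix system $AU^{n+1}=\Delta t^{-1}U^n$ of Proposition \ref{EP}, and the exact row-sum identity $a_{ii}-\sum_{j\neq i}|a_{ij}|=\Delta t^{-1}$ gives the same $l_\infty$ bound via the discrete maximum principle.
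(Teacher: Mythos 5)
Your argument is correct, and its main route is genuinely different from the paper's. The paper disposes of stability in one line: with the diagonal-dominance Assumption \ref{assume}, the frozen-coefficient operator at each time level is an M-matrix, so the discrete maximum principle of Proposition \ref{EP} applies to the scheme and the $l_\infty$ bound \eqref{uncondition2} follows directly — which is exactly the "alternative" you sketch in your closing sentences (including the row-sum identity $a_{ii}-\sum_{j\neq i}|a_{ij}|=1/\Delta t$, valid because all spatial stencils annihilate constants). Your primary route instead derives a discrete comparison principle from the monotonicity inequalities \eqref{g-diag}--\eqref{g-off-diag} of Lemma \ref{lem-mono} together with the observation that adding a constant to the new-time values at a node and its eight neighbours shifts $g_{i,j}$ by exactly $c/\Delta t$ (so space-time constants are exact interior solutions), and then squeezes $U$ between the constant barriers $\pm C$. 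The induction-in-$n$ argument at an interior maximum of $U^{n+1}-V^{n+1}$ is carried out correctly: equality at the centre, \eqref{g-off-diag} for the neighbours and the old-time value, and the constant-shift identity give $0\geq M/\Delta t$. What your route buys is that it makes explicit the standard mechanism "monotone scheme that reproduces constants $\Rightarrow$ comparison principle $\Rightarrow$ $l_\infty$ stability", and in passing it furnishes an independent proof of the paper's Lemma \ref{comparison principle} (which the paper only sketches); Assumption \ref{assume} enters only through Lemma \ref{lem-mono}, as you note. What the paper's route buys is brevity, since Proposition \ref{EP} and the M-matrix structure are already in place; note that applying it rigorously still requires freezing the coefficients selected by \eqref{cross-approx} at the computed $U^{n+1}$, a point your alternative states correctly.
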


\begin{proof}
With the diagonal-dominated assumption, the maximum principle is valid for the system \eqref{implicit  correlation2}. The estimate on $l_\infty$ \eqref{uncondition2} follows directly.
\end{proof}

We also have the following comparison principle.
\begin{lemma}
\label{comparison principle} \textrm{(Comparison Principle)} If Assumption \ref{assume} holds, $\{U_{i,j}^{n+1}\}$   satisfies 
\eqref{implicit
		correlation2}, $\{V_{i,j}^{n+1}\}$ satisfies the same difference equation
\begin{equation*}
\delta _{t}V_{i,j}^{n+1}- \frac{\sigma _{1}^{2}
\left(\delta_{x}^{2}V_{i,j}^{n+1} \right)}{2}\delta _{x}^{2}V_{i,j}^{n+1} -
\frac{\sigma _{2}^{2} \left(\delta_{y}^{2}V_{i,j}^{n+1} \right)}{2}\delta
_{y}^{2}V_{i,j}^{n+1} -(b_{12}\delta
_{xy}V)_{i,j}^{n+1}=0,
\end{equation*}
but  subject to  different initial and boundary value such that
\begin{equation*}
\left\{
\begin{array}{l}
U_{i,j}^{0}\geq V_{i,j}^{0}, \\
U_{i,j}^{n+1}|_{\left( x_{i},y_{j}\right) \in \partial \Omega }\geq
V_{i,j}^{n+1}|_{\left( x_{i},y_{j}\right) \in \partial \Omega },%
\end{array}
\right.
\end{equation*}
then  $U_{i,j}^{n+1}\geq V_{i,j}^{n+1}$,  for all $i,j=1,...,M-1$, and $n<N$. 
\end{lemma}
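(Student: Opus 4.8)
The plan is to reduce this nonlinear comparison to the discrete maximum principle of Proposition~\ref{EP} by freezing all coefficients at the values that are optimal for the grid function one wants to dominate. Set $Y^{m}_{i,j}=U^{m}_{i,j}-V^{m}_{i,j}$ on the whole space--time grid. By hypothesis $Y^{0}_{i,j}\geq 0$ for all $i,j$, and $Y^{m}_{i,j}\geq 0$ whenever $(x_{i},y_{j})\in\partial\Omega$; the goal is to show $Y^{m}_{i,j}\geq 0$ at every interior node and every level. The difficulty is that \eqref{implicit correlation2} is nonlinear: the diffusions $\sigma_{1}^{2},\sigma_{2}^{2}$ selected through \eqref{optimal_sigma} and the cross pair $(b_{12},\alpha)$ selected through \eqref{cross-approx} are generically different for $U$ and for $V$, so one cannot simply subtract the two equations and obtain a linear difference equation for $Y$.

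To get around this I would, at each interior node $(i,j)$ and each level $m\geq 1$, freeze all parameters at the values that are optimal \emph{for $V$}: set $\widehat{\sigma}_{1}^{2}=\sigma_{1}^{2}(\delta_{x}^{2}V^{m}_{i,j})$ and $\widehat{\sigma}_{2}^{2}=\sigma_{2}^{2}(\delta_{y}^{2}V^{m}_{i,j})$ as in \eqref{optimal_sigma}, and let $(\widehat{b}_{12},\widehat{\alpha})$ be the pair attaining the maximum in \eqref{cross-approx} for $V$. Since every admissible choice of parameter gives a value no larger than the realized supremum or maximum, one has for $U$
\[
\frac{\sigma_{1}^{2}(\delta_{x}^{2}U^{m}_{i,j})}{2}\,\delta_{x}^{2}U^{m}_{i,j}\;\geq\;\frac{\widehat{\sigma}_{1}^{2}}{2}\,\delta_{x}^{2}U^{m}_{i,j},\qquad (b_{12}\delta_{xy}U)^{m}_{i,j}\;\geq\;\widehat{b}_{12}\,\delta_{xy}^{\widehat{\alpha}}U^{m}_{i,j},
\]
and similarly for the $\delta_{y}^{2}$ term, whereas for $V$ each of the three is an equality by the very choice of the hatted parameters. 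Subtracting the identity \eqref{implicit correlation2} written for $V$ from the one written for $U$, these three inequalities yield
\[
\delta_{t}Y^{m}_{i,j}-\frac{\widehat{\sigma}_{1}^{2}}{2}\,\delta_{x}^{2}Y^{m}_{i,j}-\frac{\widehat{\sigma}_{2}^{2}}{2}\,\delta_{y}^{2}Y^{m}_{i,j}-\widehat{b}_{12}\,\delta_{xy}^{\widehat{\alpha}}Y^{m}_{i,j}\;\geq\;0.
\]
This is exactly an inequality of the form \eqref{linear eq} with the sign rule \eqref{rule-of-alpha}: by \eqref{cross-approx} and \eqref{b-range}, $\widehat{\alpha}=+$ precisely when $\widehat{b}_{12}=\overline{b_{12}}>0$, and $\widehat{\alpha}=-$ precisely when $\widehat{b}_{12}=\underline{b_{12}}<0$. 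The frozen coefficients depend on $(m,i,j)$ but always lie in the finite sets $\{\underline{\sigma_{1}}^{2},\overline{\sigma_{1}}^{2}\}$, $\{\underline{\sigma_{2}}^{2},\overline{\sigma_{2}}^{2}\}$ and $\{\underline{b_{12}},\overline{b_{12}}\}$, so the associated $2\times2$ matrix is diagonally dominated by Assumption~\ref{assume}.

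It then remains to invoke the maximum principle. Although the frozen coefficients vary from node to node, the associated linear difference operator still produces, row by row, an $M$-matrix (Assumption~\ref{assume}), exactly as noted after Proposition~\ref{EP} and as already exploited for the node-dependent operator in the proof of Theorem~\ref{CII2}; hence the same argument shows that the minimum of $\{Y^{m}_{i,j}\}$ over the whole grid can only be attained at an initial point ($m=0$) or at a boundary point. Both of these are $\geq 0$ by hypothesis, so $Y^{m}_{i,j}\geq 0$ at every interior node and every level, that is, $U^{n+1}_{i,j}\geq V^{n+1}_{i,j}$ for all $i,j=1,\dots,M-1$ and $n<N$. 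The step that requires care is the direction of the inequality in the last display: obtaining ``$\geq 0$'' rather than ``$\leq 0$'' forces the freezing to be carried out at the parameters optimal for $V$ --- the solution to be dominated from above --- and not at those optimal for $U$; once that is set up, the rest is the $M$-matrix / maximum-principle bookkeeping already present in the proof of Theorem~\ref{CII2}.
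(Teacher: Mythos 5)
Your proof is correct and follows essentially the same route the paper intends: form the difference $W=U-V$, use the sup/max structure of the scheme (freezing the coefficients at the values optimal for $V$, exactly the inequality exploited in the proofs of Theorem~\ref{CII2} and Lemma~\ref{lem-mono}) to obtain a linear difference inequality with node-dependent but diagonally dominated coefficients, and conclude by the discrete maximum principle of Proposition~\ref{EP}. The paper only sketches this argument, and your write-up fills in the details, including the correct observation that the freezing must be done at the $V$-optimal parameters to get the sign needed for the minimum principle.
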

\begin{proof}
  This lemma can be proved by the techniques used in the proof of Theorem \ref{CII2} and Lemma \ref{lem-mono}. Consider the governing system for $W^n_{i,j}:=U^n_{i,j}-V^n_{i,j}$ and apply the maximum principle.
\end{proof}

\begin{theorem}
\textrm{(Convergence to the viscosity solution) }Let Assumption \ref{assume}
hold, then the  implicit scheme \eqref{implicit  correlation2}
converges to the viscosity solution of the non-linear PDE
\eqref{general
model}.
\end{theorem}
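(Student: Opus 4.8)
The plan is to close the argument with the Barles--Souganidis convergence theorem \cite{G1991Convergence}: if a scheme for a degenerate-elliptic fully nonlinear equation is consistent, stable in $l_\infty$, and monotone, and if the limiting equation satisfies a strong comparison principle, then the numerical solution converges locally uniformly to the unique viscosity solution. Under Assumption \ref{assume} the scheme \eqref{implicit  correlation2} is uniquely solvable at every time step (Theorem \ref{CII2}), and Lemma \ref{LConsistency2} (consistency), Lemma \ref{S} (stability), and Lemma \ref{lem-mono} (monotonicity) supply exactly the three scheme-side hypotheses, with the discrete comparison principle of Lemma \ref{comparison principle} available as their common consequence. What remains is to (i) verify that \eqref{general model} is of the admissible type for the abstract theorem, and (ii) run the half-relaxed-limit machinery.

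For (i), write \eqref{general model} as $u_t + F(D^2u) = 0$ with $F(A) = -\sup_{Q\in\Theta}\tfrac12\,\mathrm{Tr}[AQ]$, where $\Theta$ is the bounded, closed, convex set of admissible nonnegative-definite covariance matrices built from $\Gamma_1,\Gamma_2,\Gamma_{12}$. Then $F$ is Lipschitz and degenerate elliptic ($F(A)\le F(B)$ whenever $A\ge B$), and it depends neither on $u$ nor on $Du$; this is precisely the structure under which the classical comparison principle for Hamilton--Jacobi--Bellman equations holds on the bounded parabolic cylinder $(0,T]\times\Omega$ with continuous Dirichlet data (assuming the standard compatibility of $\varphi$ and $\phi$ at the corner $\{0\}\times\partial\Omega$, so that a continuous solution exists; otherwise the boundary condition is read in the viscosity sense and the conclusion holds up to the usual boundary layer). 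I would cite the standard comparison result here.

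For (ii), define the half-relaxed limits $\overline u = \limsup^* U$ and $\underline u = \liminf_* U$ as $h,\Delta t\to 0$; stability makes them locally bounded, $\overline u$ is upper semicontinuous, $\underline u$ lower semicontinuous, and $\underline u\le\overline u$ by construction. The heart of the proof is the test-function argument showing that monotonicity plus consistency force $\overline u$ to be a viscosity subsolution and $\underline u$ a viscosity supersolution of \eqref{general model}: at a strict local maximum of $\overline u-\psi$ with $\psi$ smooth, choose grid nodes maximizing $U-\psi$ nearby, perturb $\psi$ by a vanishing constant, use the monotonicity inequalities \eqref{g-diag}--\eqref{g-off-diag} to transfer $g_{i,j}[U]\le 0$ into an inequality for $g_{i,j}[\psi]$, and pass to the limit via consistency. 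The initial and boundary relations come for free because the scheme reproduces $\varphi$ at $n=0$ and $\phi$ on $\partial\Omega$ exactly. Then the comparison principle of step (i) gives $\overline u\le\underline u$, hence $\overline u=\underline u=:u$ is continuous, is the unique viscosity solution of \eqref{general model}, and $U^n_{i,j}\to u$ locally uniformly.

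The main obstacle is not any of the three scheme properties — those are already established — but the PDE-side input: one must justify a strong comparison principle for \eqref{general model} on the truncated domain and the existence of a continuous value function there, which in turn needs either compatibility conditions on $(\varphi,\phi)$ or the viscosity interpretation of the Dirichlet data, together with a precise statement of the sense in which convergence holds near $\partial\Omega$. Once that is in place, the rest is the routine Barles--Souganidis argument built on Lemmas \ref{LConsistency2}, \ref{lem-mono}, and \ref{S}.
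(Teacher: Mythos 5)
Your proposal is correct and follows essentially the same route as the paper: the paper's proof is a direct appeal to the Barles--Souganidis theorem, using exactly Lemma \ref{LConsistency2} (consistency), Lemma \ref{S} ($l_\infty$-stability), and Lemma \ref{lem-mono} (monotonicity). The only difference is that you spell out the half-relaxed-limit argument and the PDE-side strong comparison principle on the truncated domain, hypotheses the paper leaves implicit in its citation of \cite{G1991Convergence}; this is a fair and useful elaboration rather than a different approach.
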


\begin{proof}
Since the  scheme \eqref{implicit  correlation2} is consistent, $%
l_{\infty }-$stable, and monotone, the convergence follows from the results of Barles and Souganidis \cite%
{G1991Convergence} directly.
\end{proof}

\begin{remark}
If Assumption \ref{assume} does not hold, we can also use the idea of a wide
stencil (see Ma and Forsyth\textsuperscript{\cite{MF}}) to construct a
numerical scheme with relaxed conditions, and we leave it to future research.
\end{remark}

\begin{theorem} \label{rate}
\textrm{(Rate of convergence)} Let $u$ be the viscosity solution of equation \eqref{general model}, $U$ be the numerical solution of equation \eqref{implicit correlation2}.  
If Assumption \ref{assume} holds and there exists some $\beta \in \left( 0,1\right)$, such that $u\in C^{1+\beta /2,2+\beta }\left( \left[0,T\right] \times \Omega\right)$,    then 
\begin{equation}
\max_n\left\vert \left\vert u^{n}-U^{n}\right\vert \right\vert
_{\infty }\leq C\left( \Delta t^{\frac{\beta }{2}}+h^{\beta }\right), \end{equation}
where $C$ is a positive constant independent of $\Delta t$ and $h$.
\end{theorem}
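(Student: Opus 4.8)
The plan is to use the classical route for error estimates of monotone schemes when the limit solution is smooth: produce a quantitative consistency estimate (available because $u\in C^{1+\beta/2,2+\beta}$) and then convert it into a global $l_\infty$ bound via the discrete comparison principle, Lemma~\ref{comparison principle}, and the underlying M-matrix maximum principle, Proposition~\ref{EP}. Under this much regularity no one-sided regularization (``shaking the coefficients'') is needed; the monotonicity and stability of the scheme, already established, do all the work.

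\emph{Consistency.} Write $g[v]^{n+1}_{i,j}$ for the left-hand side of the difference equation in \eqref{implicit correlation2} evaluated on a grid function $v$. I would first show that the exact solution satisfies $\bigl|g[u]^{n+1}_{i,j}\bigr|\le\rho:=C(\Delta t^{\beta/2}+h^{\beta})$ for all interior $(i,j)$ and all $n$, with $C$ depending only on $\|u\|_{C^{1+\beta/2,2+\beta}}$ and on $\overline{\sigma_1}^2,\overline{\sigma_2}^2,\overline{b_{12}},|\underline{b_{12}}|$. The ingredients are: (i) $\bigl|\delta_t u^{n+1}_{i,j}-u_t\bigr|\le C\Delta t^{\beta/2}$, since $u_t$ is H\"older-$\tfrac{\beta}{2}$ in $t$; (ii) $\bigl|\delta_x^2u-u_{xx}\bigr|$, $\bigl|\delta_y^2u-u_{yy}\bigr|$, $\bigl|\delta_{xy}^{\pm}u-u_{xy}\bigr|$ are all $O(h^{\beta})$, from Taylor expansion with integral remainder and the $\beta$-H\"older continuity of $D^2u$ in $x$ --- for the skewed seven-point stencils \eqref{xy+}--\eqref{xy-} one writes the combination as a sum of two mixed second differences, each of which is a double integral of $u_{xy}$; (iii) the maps $s\mapsto\max\bigl(\tfrac{\overline{\sigma_1}^2}{2}s,\tfrac{\underline{\sigma_1}^2}{2}s\bigr)$, $s\mapsto\max\bigl(\tfrac{\overline{\sigma_2}^2}{2}s,\tfrac{\underline{\sigma_2}^2}{2}s\bigr)$ and $(p,q)\mapsto\max(\overline{b_{12}}p,\underline{b_{12}}q)$ are globally Lipschitz, so the nodal errors from (i)--(ii) propagate through the optimizations \eqref{optimal_sigma}, \eqref{cross-approx} with a fixed constant; and (iv) $u$ satisfies \eqref{general model} pointwise, the continuous cross term being $\sup_{b_{12}\in\Gamma_{12}}b_{12}u_{xy}=\max(\overline{b_{12}}u_{xy},\underline{b_{12}}u_{xy})$, which is exactly the $h\to0$ limit of \eqref{cross-approx}. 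Subtracting the PDE from $g[u]^{n+1}_{i,j}$ and combining (i)--(iv) yields the bound.

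\emph{Comparison and conclusion.} Next I would introduce the perturbed grid functions $\overline u^{\,n}_{i,j}:=u(t^n,x_i,y_j)+n\Delta t\,\rho$ and $\underline u^{\,n}_{i,j}:=u(t^n,x_i,y_j)-n\Delta t\,\rho$. Since $\delta_x^2,\delta_y^2,\delta_{xy}^{\pm}$ annihilate additive constants, the volatilities and covariance selected in \eqref{optimal_sigma}--\eqref{cross-approx} are the same on $u$ and on $u\pm n\Delta t\,\rho$, while $\delta_t$ absorbs precisely $\pm\rho$; hence $g[\overline u]^{n+1}_{i,j}=g[u]^{n+1}_{i,j}+\rho\ge0$ and $g[\underline u]^{n+1}_{i,j}=g[u]^{n+1}_{i,j}-\rho\le0$. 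As both $U$ and $u$ carry the exact initial datum $\varphi$ and boundary datum $\phi$, one has $\underline u^{\,0}=U^0=\overline u^{\,0}$ on the grid and $\underline u^{\,n+1}\le U^{n+1}\le\overline u^{\,n+1}$ on $\partial\Omega$. Applying the comparison principle to the pairs $(\overline u,U)$ and $(U,\underline u)$ --- this is Lemma~\ref{comparison principle} with a one-signed source term on the right, which its proof already covers: set $W^n:=\overline u^{\,n}-U^n$ (resp.\ $U^n-\underline u^{\,n}$), bound the difference of the nonlinear diffusion terms from below by a linear diagonally dominated operator applied to $W^{n+1}$ exactly as in Theorem~\ref{CII2}, and invoke the M-matrix maximum principle of Proposition~\ref{EP} under Assumption~\ref{assume} --- gives $\underline u^{\,n}\le U^n\le\overline u^{\,n}$, i.e.
\[
\max_n\bigl\|u^n-U^n\bigr\|_{\infty}\le N\Delta t\,\rho=T\rho=C\,T\,(\Delta t^{\beta/2}+h^{\beta}),
\]
which is the claim after renaming the constant.

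\emph{Main obstacle.} The substantive step is the consistency estimate under only $C^{\beta}$ (rather than $C^2$) control of $D^2u$: the $O(h^\beta)$ truncation error of the two skewed seven-point cross stencils \eqref{xy+}--\eqref{xy-} must be obtained by honest fractional-order Taylor expansions, and one has to verify that these errors persist through the non-smooth optimizations \eqref{optimal_sigma}, \eqref{cross-approx}. The comparison step contributes nothing genuinely new beyond the observation that the proof of Lemma~\ref{comparison principle} already tolerates a sign-definite inhomogeneity.
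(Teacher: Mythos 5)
Your proposal is correct and takes essentially the same route as the paper: a quantitative $O(\Delta t^{\beta/2}+h^{\beta})$ truncation estimate exploiting the H\"older regularity of $u$, a linear-in-time barrier $t^{n}\rho$, and the discrete maximum/comparison principle under Assumption \ref{assume}. The paper merely packages the final step differently, applying Proposition \ref{EP} to $V^{n}_{i,j}=u^{n}_{i,j}-U^{n}_{i,j}-t^{n}\max_{n}\|R^{n}_{up}\|_{\infty}$ (and the symmetric $W^{n}$) via sublinearity of the $\sup$/$\max$ terms, which is equivalent to your super-/sub-solution comparison with $u\pm t^{n}\rho$.
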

\begin{proof}
 Approximating the derivatives by  corresponding difference quotients in \eqref{general model}, we obtain  
\begin{eqnarray}
0&=& \delta _{t}u_{i,j}^{n}+R^n_{t}-\sup\limits_{_{\sigma _{1}^{2}\in \Gamma
_{1}}}\left( \frac{\sigma _{1}^{2}}{2}\left( \delta
_{x}^{2}u_{i,j}^{n}+R^n_{x}\right) \right) -\sup\limits_{_{\sigma _{2}^{2}\in
\Gamma _{2}}}\left( \frac{\sigma _{2}^{2}}{2}\left( \delta
_{y}^{2}u_{i,j}^{n}+R^n_{y}\right) \right)\notag\\
& & -\max\{\overline{b_{12}}\left( \delta^+ _{xy}u_{i,j}^n+R^+_{xy}\right), \underline{b_{12}}\left( \delta^- _{xy}u_{i,j}^n+R^-_{xy}\right) \},
\end{eqnarray}
 since $\sup\limits_{b_{12}\in \Gamma_{12}} (b_{12} u_{xy})= \max\{\overline{b_{12}} u_{xy}, \ \underline{b_{12}} u_{xy}\}$. Here we have the truncation error terms $$R^n_t=u^n_{t}-\delta_{t}u^n=O((\Delta t)^{\beta/2}), R^n_x=u^n_{xx}-\delta^2_x u^n =O(h^\beta), R^n_y=u^n_{yy}-\delta^2_y u^n=O(h^\beta), $$
 and $$R^+_{xy}=u_{xy}-\delta^+ _{xy}u = O(h^\beta), R^-_{xy}=u_{xy}-\delta^- _{xy}u = O(h^\beta).$$
Thanks to $\sup(f+g)\leq \sup f+ \sup g$, we have
\begin{equation} \label{u-up}
 \delta _{t}u_{i,j}^{n}-\sup\limits_{_{\sigma _{1}^{2}\in \Gamma
_{1}}}\left( \frac{\sigma _{1}^{2}}{2}\delta _{x}^{2}u_{i,j}^{n}\right)
-\sup\limits_{_{\sigma _{2}^{2}\in \Gamma _{2}}}\left( \frac{\sigma _{2}^{2}%
}{2}\delta _{y}^{2}u_{i,j}^{n}\right) -\max\{\overline{b_{12}}  \delta^+ _{xy}u_{i,j}^n, \underline{b_{12}}  \delta^- _{xy}u_{i,j}^n  \}\leq R^n_{up},
\end{equation}
where \begin{equation}
R^n_{up}= -R^n_t+\sup\limits_{_{\sigma _{1}^{2}\in \Gamma
_{1}}}\left( \frac{\sigma _{1}^{2}}{2} R^n_x\right)
+\sup\limits_{_{\sigma _{2}^{2}\in \Gamma _{2}}}\left( \frac{\sigma _{2}^{2}%
}{2}R^n_y\right) +\max\{\overline{b_{12}} R^+_{xy}, \underline{b_{12}}R^-_{xy} \}.
\end{equation}
 By the fact $\sup (f+g)\geq \sup f + \inf g$, we have 
\begin{equation} \label{u-low}
 \delta _{t}u_{i,j}^{n}-\sup\limits_{_{\sigma _{1}^{2}\in \Gamma
_{1}}}\left( \frac{\sigma _{1}^{2}}{2}\delta _{x}^{2}u_{i,j}^{n}\right)
-\sup\limits_{_{\sigma _{2}^{2}\in \Gamma _{2}}}\left( \frac{\sigma _{2}^{2}%
}{2}\delta _{y}^{2}u_{i,j}^{n}\right) -\max\{\overline{b_{12}}  \delta^+ _{xy}u_{i,j}^n, \underline{b_{12}}  \delta^- _{xy}u_{i,j}^n  \}\geq R^n_{low},
\end{equation}
where \begin{equation}
R^n_{low}= -R^n_t+\inf\limits_{_{\sigma _{1}^{2}\in \Gamma
_{1}}}\left( \frac{\sigma _{1}^{2}}{2} R^n_x\right)
+\inf\limits_{_{\sigma _{2}^{2}\in \Gamma _{2}}}\left( \frac{\sigma _{2}^{2}%
}{2}R^n_y\right) +\min\{\overline{b_{12}} R^+_{xy}, \underline{b_{12}}R^-_{xy} \}.
\end{equation}

Set $V_{i,j}^{n}=u_{i,j}^{n}-U_{i,j}^{n}-  t^{n}*\max\limits_{n} \|R^n_{up}\|_\infty$, then we have, by the fact $\sup (f-g)\geq \sup f -\sup g$, that, for $0<i,j<M,$
\begin{eqnarray}
&&\delta _{t}V_{i,j}^{n}-\sup\limits_{_{\sigma _{1}^{2}\in \Gamma _{1}}}\left(
\frac{\sigma _{1}^{2}}{2}\delta _{x}^{2}V_{i,j}^{n}\right)
-\sup\limits_{_{\sigma _{2}^{2}\in \Gamma _{2}}}\left( \frac{\sigma _{2}^{2}%
}{2}\delta _{y}^{2}V_{i,j}^{n}\right) -\max \left\{ \overline{b_{12}}\delta
_{xy}^{+}V_{i,j}^{n},\ \underline{b_{12}}\delta
_{xy}^{-}V_{i,j}^{n} \right\}\notag\\
&&=\delta _{t}(u_{i,j}^{n}-U_{i,j}^{n})-\sup\limits_{_{\sigma _{1}^{2}\in
\Gamma _{1}}}\left( \frac{\sigma _{1}^{2}}{2}\delta
_{x}^{2}(u_{i,j}^{n}-U_{i,j}^{n})\right) -\sup\limits_{_{\sigma _{2}^{2}\in
\Gamma _{2}}}\left( \frac{\sigma _{2}^{2}}{2}\delta
_{y}^{2}(u_{i,j}^{n}-U_{i,j}^{n})\right) \notag\\
&&\ \  -\max \left\{ \overline{b_{12}}\delta
_{xy}^{+}(u_{i,j}^{n}- U_{i,j}^{n}),\underline{b_{12}}\delta
_{xy}^{-}(u_{i,j}^{n}- U_{i,j}^{n})\right\}  -\max\limits_{n} \|R^n_{up}\|_\infty \notag\\
&&\leq \delta _{t}u_{i,j}^{n}-\sup\limits_{_{\sigma _{1}^{2}\in \Gamma
_{1}}}\left( \frac{\sigma _{1}^{2}}{2}\delta _{x}^{2}u_{i,j}^{n}\right)
-\sup\limits_{_{\sigma _{2}^{2}\in \Gamma _{2}}}\left( \frac{\sigma _{2}^{2}%
}{2}\delta _{y}^{2}u_{i,j}^{n}\right) -\max \left\{ \overline{b_{12}}\delta
_{xy}^{+}u_{i,j}^{n},\underline{b_{12}}\delta
_{xy}^{-}u_{i,j}^{n}\right\} -\max\limits_{n} \|R^n_{up}\|_\infty\notag \\
&&\ \ -\left( \delta _{t}U_{i,j}^{n}-\sup\limits_{_{\sigma _{1}^{2}\in \Gamma
_{1}}}\left( \frac{\sigma _{1}^{2}}{2}\delta _{x}^{2}U_{i,j}^{n}\right)
-\sup\limits_{_{\sigma _{2}^{2}\in \Gamma _{2}}}\left( \frac{\sigma _{2}^{2}%
}{2}\delta _{y}^{2}U_{i,j}^{n}\right) -\max \left\{ \overline{b_{12}}\delta
_{xy}^{+} U_{i,j}^{n},\underline{b_{12}}\delta
_{xy}^{-}U_{i,j}^{n}\right\}\right) \notag \\
&&\leq R^n_{up}-\max\limits_{n} \|R^n_{up}\|_\infty  
 \leq 0,%
  \end{eqnarray}
where we have used \eqref{implicit  correlation2} and \eqref{u-up}.
Thanks to the initial condition $V_{i,j}^0 = 0$, and the boundary condition $V_{i,j}^n|_{(x_i,y_j)\in \partial \Omega} = - t^n*\max\limits_{n} \|R^n_{up}\|_\infty$, we have, by the maximum principle Proposition \eqref{EP}, that 
$$V^n_{i,j}\leq 0, \ \mbox{for}\ 0\leq i,j\leq M,\ 0\leq n\leq N.$$
That is 
\begin{equation}
u_{i,j}^{n}-U_{i,j}^{n}\leq t^{n}*\max\limits_{n} \|R^n_{up}\|_\infty.  \label{rate-up}
\end{equation}%

Similarly, let $W_{i,j}^{n}=U_{i,j}^{n}-u_{i,j}^{n}- t^{n}*\max\limits_n\|R^n_{low}\|_\infty$, then from \eqref{u-low}, we have 
\begin{equation}
U_{i,j}^{n}-u_{i,j}^{n}\leq  t^{n}*\max\limits_n\|R^n_{low}\|_\infty.  \label{rate-low}
\end{equation}%
Hence, we finally have 
$$\max_n \|u^n-U^n\|_\infty \leq T*\max_n(\|R^n_{up}\|_\infty+\|R^n_{low}\|_\infty)\leq C ((\Delta t)^{\beta/2} + h^\beta).$$
\end{proof}

\begin{remark} In Theorem \ref{rate}, we assume the regularity $u\in C^{1+\beta/2,
2+\beta} \left([0, T] \times \Omega\right)$, where $0<\epsilon <T.$ Actually, the viscosity solution $u$ of equation \eqref{UVCM}, has been proven to belong to the space $C^{1+\beta/2,
2+\beta} \left([\epsilon, T] \times \mathbb{R}^2\right)$, where $0<\epsilon <T$ and $\beta \in
(0,1)$\textsuperscript{\cite{p10}}. With consistent initial and boundary conditions, the solution of \eqref{general model} may be smooth up to the boundary of the
interval $[0, T]$.
\end{remark}


\section{Numerical examples}
\label{sec:5} 
In this section, we present some numerical examples to show the efficiency of our numerical scheme.  

\begin{example}
The following problem has an exact solution  $u=\sin (5(x+y+t))$.
\begin{equation*}
\left\{
\begin{array}{l}
u_{t}-\max\limits_{\substack{ \sigma _{1}^{2}\in \Gamma _{1},\sigma
_{2}^{2}\in \Gamma _{2},  \\ b_{12}\in \Gamma _{12}}}%
\left( \frac{\sigma _{1}^{2}}{2}u_{xx}+\frac{\sigma _{2}^{2}}{2}u_{yy}+b_{12}u_{xy}\right) =f, \\
u|_{t=0}=\sin (5(x+y)), \\
u|_{(x,y)\in \partial \Omega }=\sin (5(x+y+t))|_{(x,y)\in \partial \Omega },%
\end{array}%
\right.
\end{equation*}%
where $t\in (0,1),$ $\Omega =(-1,1)\times (-1,1),$ $\sigma
_{1}\in \lbrack 0.2,0.3],$ $\sigma _{2}\in \lbrack 0.25,0.35],$ $b_{12}\in \lbrack -0.04,0.03]$, and 
\begin{equation*}
f=5\cos (w)+25\min\limits_{\substack{ \sigma _{1}^{2}\in \Gamma _{1},\sigma
_{2}^{2}\in \Gamma _{2},  \\ b_{12}\in \Gamma _{12}}}%
\left( \frac{\sigma _{1}^{2}}{2}\sin (w)+\frac{\sigma _{2}^{2}}{2}\sin
(w)+b_{12}\sin (w)\right),
\end{equation*}%
with $w=5(x+y+t)$.
\label{exam2}
\end{example}

Due to the high regularity of the solutions of this equation, it is
theoretically not difficult to derive the following error estimation between
the numerical solution and the exact solution
\begin{equation*}
\left\vert \left\vert u^{n}-U^{n}\right\vert \right\vert _{\infty }\leq
O\left( \Delta t+h^{2}\right).
\end{equation*}
From Table \ref{ex2}, it can be observed that the error rate between the
numerical solution and the exact solution is first-order in time and
second-order in space, in terms of the norm of $\mathcal{L}%
^{\infty}(0,1;\Omega)$ and $\mathcal{L}^2(0,1;\Omega)$.
\begin{table}[!ht]
\caption{Error accuracy}
\label{ex2}\centering
{\footnotesize \centering
\begin{tabular}{@{}ccccccc}
\toprule Timesteps & Nodes & $\mathcal{L}^{\infty}(0,1;\Omega)$-error &
error order & $\mathcal{L}^2(0,1;\Omega)$-error & error order &  \\ \hline
50 & 11$\times$11 & 1.9013e-01 &  & 1.6062e-01 &  &  \\ \hline
200 & 21$\times$21 & 5.1659e-02 & 1.8799 & 4.3689e-02 & 1.8783 &  \\ \hline
800 & 41$\times$41 & 1.3075e-02 & 1.9822 & 1.1067e-02 & 1.9810 &  \\ \hline
3200 & 81$\times$81 & 3.2597e-03 & 2.0040 & 2.7594e-03 & 2.0038 &  \\
\toprule &  &  &  &  &  &
\end{tabular}
}
\end{table}

In Figure \ref{iterator1},  we present the number of inner iterations at each time step. It can be seen that the number of iterations per time step varies only from 3 to 5, demonstrating the high efficiency of the implicit numerical algorithm.
\begin{figure}[!h]
\center
\includegraphics[scale=0.45]{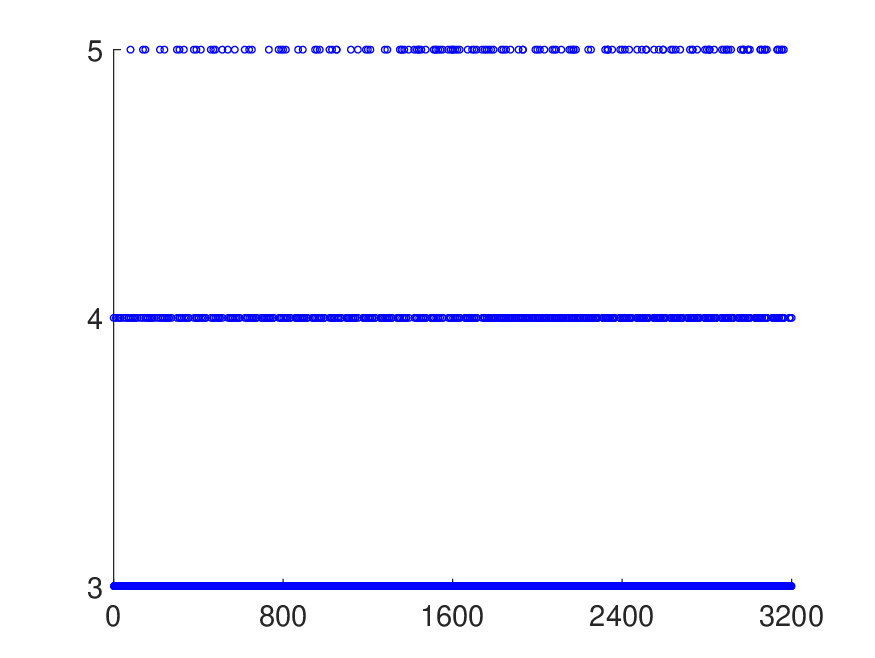} 
\caption{The number of iterations within each time step.}
\label{iterator1}
\end{figure}

\begin{example}
Consider a problem as follows
\begin{equation*}
\left\{
\begin{array}{l}
u_{t}-\max\limits_{\substack{ \sigma _{1}^{2}\in \Gamma _{1},\sigma
_{2}^{2}\in \Gamma _{2},  \\ b_{12}\in \Gamma _{12}}}%
\left( \frac{\sigma _{1}^{2}}{2}u_{xx}+\frac{\sigma _{2}^{2}}{2}u_{yy}+b_{12}u_{xy}\right) =0, \\
u|_{t=0}=\sin (5(x+y)), \\
u|_{(x,y)\in \partial \Omega }=\sin (5(x+y+t))|_{(x,y)\in \partial \Omega },%
\end{array}%
\right.
\end{equation*}%
where the parameter settings are identical to those in Example \ref{exam2}.
A reference \textquotedblleft exact" solution is token as the
numerical solution on a sufficiently fine grid (time step $\Delta t=1/5000$,
space step $h=1/180$). \label{exam4}
\end{example}

According to Table \ref{ex22}, it can be observed that the error order
between the numerical solution and the exact solution is approximately
first-order in time and second-order in space, in terms of the norms of $%
\mathcal{L}^{\infty }(0,1;\Omega )$ and $\mathcal{L}^{2}(0,1;\Omega )$. It
should be noted that the error order is higher than our theoretical results,
which is a very interesting phenomenon.
\begin{table}[!ht]
\caption{Error accuracy}
\label{ex22}
\centering
{\footnotesize \centering
\begin{tabular}{@{}ccccccc}
\toprule Timesteps & Nodes & $\mathcal{L}^{\infty}(0,1;\Omega)$-error &
error order & $\mathcal{L}^2(0,1;\Omega)$-error & error order &  \\ \hline
50 & 11$\times$11 & 2.3641e-01 &  & 1.4388e-01 &  &  \\ \hline
200 & 21$\times$21 & 9.0651e-02 & 1.3829 & 4.9501e-02 & 1.5393 &  \\ \hline
800 & 41$\times$41 & 2.8399e-02 & 1.6475 & 1.3452e-02 & 1.8796 &  \\ \hline
3200 & 81$\times$81 & 7.3077e-03 & 1.9584 & 3.3629e-03 & 2.000 &  \\
\toprule &  &  &  &  &  &
\end{tabular}
}
\end{table}

We also present the number of inner iterations in each time step. Figure \ref{iterator2} demonstrates that the number of iterations per time step typically ranges from 3 to 4, which also indicates the high efficiency of our implicit numerical algorithm. The
changes in volatilities $\sigma _{1}$, $\sigma _{2}$ and covariance $b_{12}$ over time are illustrated in Figures \ref{sigma11}, %
\ref{sigma22} and \ref{irho2}, respectively, on the mesh with 3200 time steps and 81$\times $81 spatial grid points. 
\begin{figure}[!ht]
\center
\includegraphics[scale=0.45]{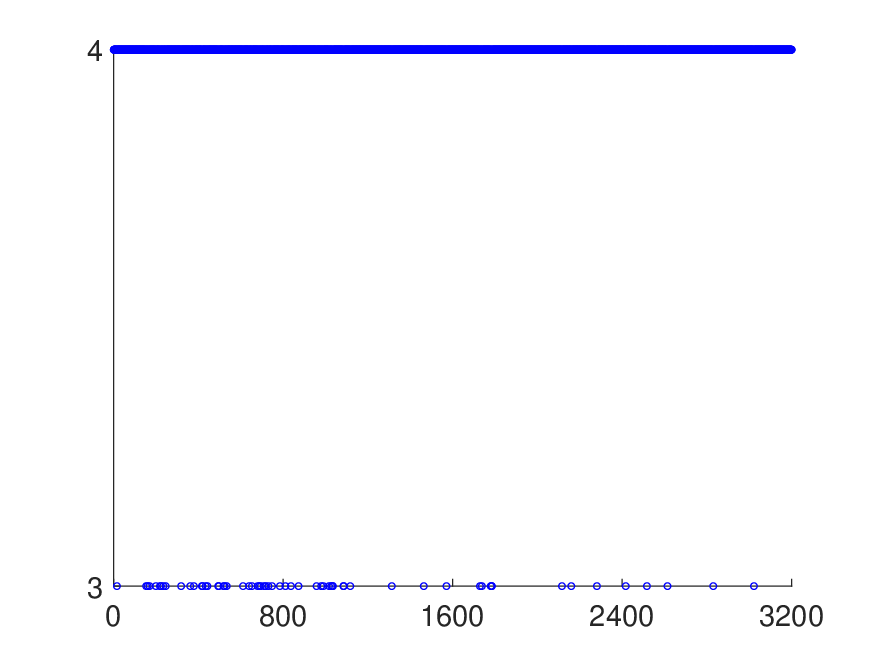} 
\caption{The number of iterations within each time step.}
\label{iterator2}
\end{figure}

\begin{figure}[!ht]
\center
\includegraphics[scale=0.82]{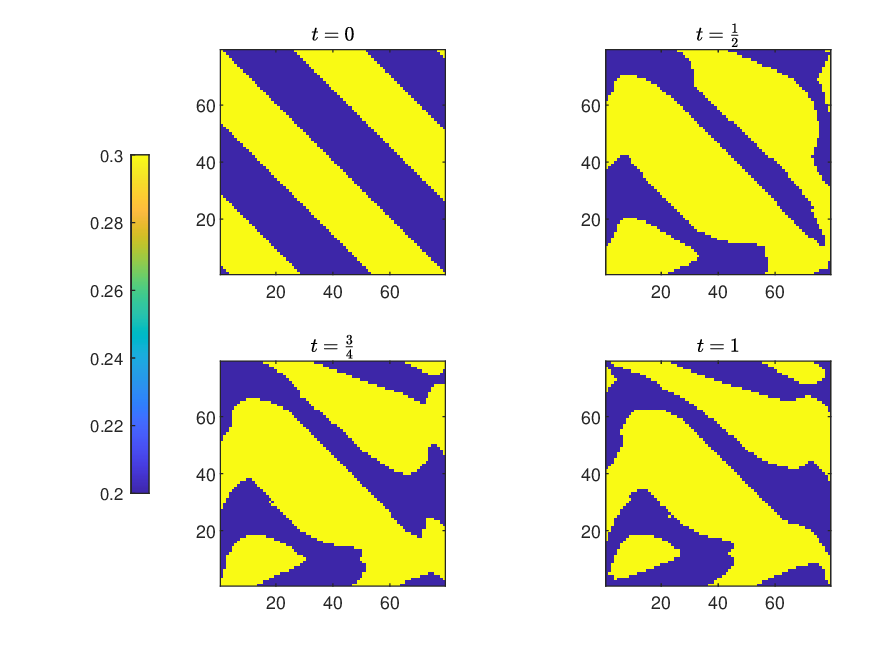}
\caption{The graph of the variation of $\protect\sigma _{1}$ over time.}
\label{sigma11}
\end{figure}

\begin{figure}[!ht]
\center
\includegraphics[scale=0.82]{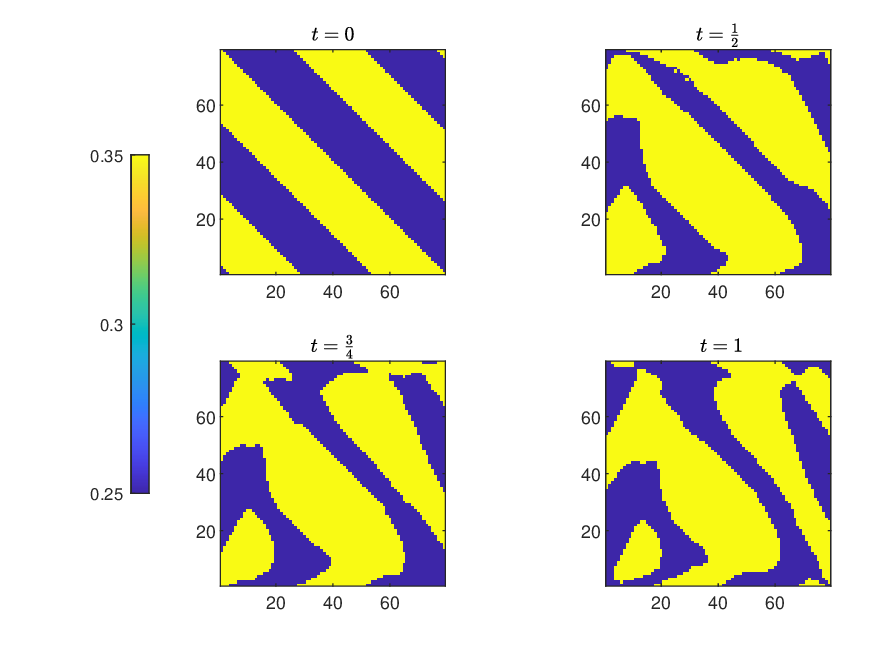} 
\caption{The graph of the variation of $\protect\sigma_{2}$ over time.}
\label{sigma22}
\end{figure}

\begin{figure}[!ht]
\center
\includegraphics[scale=0.82]{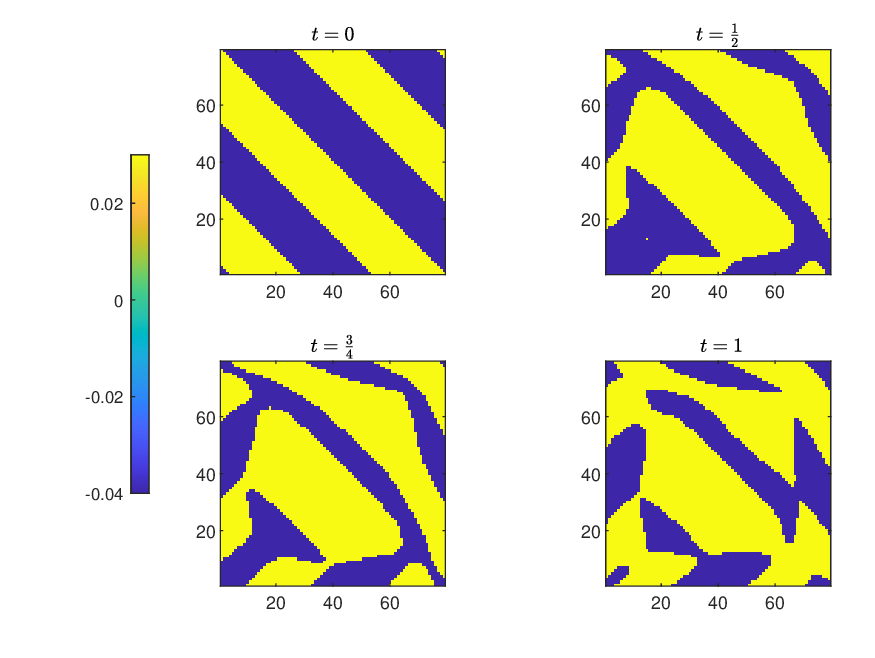} 
\caption{The graph of the variation of $b_{12}$ over time.}
\label{irho2}
\end{figure}

\section{Conclusions}
\label{section4} We have developed an implicit discretization scheme
to solve the general two-dimensional G-heat equation, which particularly addresses cases where the sign of the correlationship is uncertain. To ensure convergence of the implicit discretization scheme, we require
that the covariance matrix of the two underlying assets be diagonally
dominant, which is not too restrictive. First, we prove the monotonic
convergence of the non-linear inner iteration at each time step. Then, we
demonstrate the consistency, stability, and monotonicity of the numerical
scheme, thus establishing its convergence. Furthermore, we provide an
estimate of the convergence order. Finally, we provide corresponding
numerical examples and find that although the implicit numerical algorithm
involves inner iterations at each time step, it remains highly efficient
with a computational complexity about 3-4 times that of solving linear
expectations.

\section*{Acknowledgments}
We are grateful to Professors Shige Peng and Lihe Wang for useful discussions. This work of Yue was partially supported by the NSFC Grant 12371401.


\end{document}